\DeclareSymbolFontAlphabet{\mathrsfs}{rsfs}
\newcommand{\scrM}{\mathrsfs{M}}
\newcommand{\scrR}{\mathrsfs{R}}
\newcommand{\DHS}{\mathrsfs{H}}
\newcommand{\E}{\mathrsfs{E}}
\newcommand{\Lie}{\mathrsfs{L}}
\newcommand{\Surs}{\ensuremath{\mathcal{S}}}
\def\uk{\underline{k}}
\newtheorem{defn}{Definition}
\newtheorem{lemma}{Lemma}
\newtheorem{prop}{Proposition}
\def\be{\begin{equation}}
\def\ee{\end{equation}}
\def\ba{\begin{eqnarray}}
\def\ea{\end{eqnarray}}
\newcommand{\pb}[1]{\hbox{\lower0.5ex\hbox{${}_{\leftarrow}$}}\kern-1.9ex{#1}}
\def\rmd{\mathrm{d}}
\def\scri{\mathscr{I}}
\def\scrip{\scri^{+}}
\def\t{\tilde}
\def\R{\mathcal{R}}
\def\h{\hat}
\def\f{\frac}
\def\qo{\mathring{q}}
\def\qot{\mathring{\tilde{q}}}
\def\psio{\mathring{\psi}}
\def\Ko{\mathring{K}}
\def\Lo{\mathring{L}}
\def\eo{\mathring\epsilon}
\def\eot{\mathring{\t\epsilon}}
\def\hr{\hat{r}}
\def\={\hat{=}}
\def\Dot{\mathring{\tilde{D}}}
\begin{document}

\title[]{Dynamical Horizon Segments and Spacetime Isometries}

\author{Abhay Ashtekar}
\affiliation{Institute for Gravitation and the Cosmos \& Physics Department, Pennsylvania State University, University Park, PA 16802, U.S.A.}
\email{ashtekar.gravity@gmail.com}

\author{Badri Krishnan}
\affiliation{Institute for Mathematics, Astrophysics and Particle Physics, Radboud University, Heyendaalseweg 135, 6525 AJ Nijmegen, The Netherlands}
\affiliation{Albert-Einstein-Institut, Max-Planck-Institut f\"ur Gravitationsphysik, Callinstra{\ss}e 38, 30167 Hannover, Germany}
\affiliation{Leibniz Universit\"at Hannover, 30167 Hannover, Germany}
\email{badri.krishnan@ru.nl}

\begin{abstract}

Given a space-time $(\mathrsfs{M}, g_{ab})$ admitting a dynamical horizon segment $\DHS$, we show that there are stringent constraints on the Killing fields $\xi^a$ that $g_{ab}$ can admit in a neighborhood of $\DHS$: Generically, $\xi^a$ can only be a rotational Killing field which, furthermore, leaves each marginally trapped 2-sphere cross-section $\Surs$ of $\DHS$ invariant. Finally, if $\xi^a$ happens to be hypersurface orthogonal near $\DHS$, then, not only the angular momentum but also \emph{all} spin multipoles vanish on every $\Surs$; the \emph{entire} spin structure of these DHSs is indistinguishable from that of spherically symmetric DHSs!
  
\end{abstract}

\maketitle

\section{Introduction}
\label{s1}

By now it is well-recognized that the use of event horizons (EHs) $\E$ to characterize black holes (BHs) is untenable in \emph{dynamical} situations (see, e.g., \cite{Ashtekar:2004cn,Visser_2014}). The most important limitation comes from the fact that EHs are teleological. In particular, they can form and grow in flat regions of space-time where nothing at all is happening, \emph{in anticipation} of a collapse that may take place a million years into the future (see, e.g., \cite{Ashtekar:2003hk,kehle2024extremalblackholeformation})! For, to determine if an EH forms in a given stellar collapse, or to locate it in a space-time representing a binary BH merger, one needs to know space-time geometry all the way to the infinite future. Consequently, in the course of a simulation, EHs cannot be used to track the progenitors, nor to locate the merger; they can only be found at the end of the simulation, as an after thought. Also,  their definition makes an essential use of the future boundary $\scrip$ of space-time, which furthermore has to be complete \cite{Geroch:1978ub}. This feature makes them unsuitable in the exploration of conceptual issues where it is simply inappropriate to make \emph{a priori} global assumptions on whether the future evolution of the space-time of interest will extend  all the way to include a complete $\scrip$. This is particularly true in the investigation of cosmic censorship in classical general relativity (GR) \cite{Andersson:2007fh}, and in explorations of the final state of a BH in quantum gravity, at the end of the evaporation process \cite{Ashtekar_2020,afshordi2024blackholesinside2024}. 

Because of these limitations, a teleology-free, quasi-local notion of \emph{dynamical horizons} was introduced two decades ago \cite{Ashtekar:2002ag,Ashtekar:2003hk,Ashtekar:2004cn,Booth:2005qc}. Subsequently it  has been analyzed extensively from a mathematical relativity perspective (see, in particular, \cite{Booth:2003ji,Senovilla:2003tw,Andersson:2005gq,Ashtekar:2005ez,Andersson:2007fh,Andersson:2008up,Jaramillo:2011pg,Ashtekar:2013qta,Booth_2020}), and used widely in numerical relativity (see, in particular, \cite{Jaramillo:2011re,Jaramillo:2011rf,Mourier:2020mwa,Booth:2021sow,Pook-Kolb:2021gsh,Pook-Kolb:2021jpd,Chen_2022,Khera:2023oyf}) and in the investigation of the quantum evaporation of BHs (see, e.g., \cite{Ashtekar:2005cj,Hayward_2006,Ashtekar_2008,Ashtekar_2011a,Ashtekar_2011b,Frolov_2014,frolov2014blackholesexist,Bianchi_2018,ashtekar2025blackholeevaporationloop}). In particular, these investigations have led to detailed expressions relating the change in the area of dynamical horizons to the flux of energy falling across them, furnished multipole moments to track the evolution of their shape and spin structure in an invariant manner, and provided a great deal of analytical and numerical control on the formation and growth of dynamical horizons. Thanks to these new inputs, a sharper notion of \emph{dynamical horizon segments} (DHSs) has now emerged \cite{ashtekar2025quasilocalblackholehorizons}. 

In this note we will analyze the interplay between the geometry of DHSs $\DHS$ and isometries that the space-time metric $g_{ab}$ can admit in a neighborhood of $\DHS$. We will find that the very existence of a DHS severely constrains possible isometries and conversely the presence of isometries constrains the geometry of DHSs in a surprisingly strong fashion.

This note is organized as follows. In section \ref{s2} we recall the basic notions and results from the literature. In section \ref{s3} we focus on Killing fields that are causal in a neighborhood of a DHS and in section \ref{s4} those that are spacelike. The arguments involved are different from those used in the literature summarized in section \ref{s2.3} and the results are stronger in two respects: they are free of additional assumptions on DHSs that were made in earlier discussions, and they rule out possibilities that were not contemplated there. In section \ref{s5} we discuss the complementary issue: restrictions on the geometry of DHSs imposed by the presence of permissible isometries. We show that if the Killing field is hypersurface orthogonal, the restrictions are surpassingly strong. Section \ref{s6} summarizes the main results and their implications. In Appendix A we discuss restrictions on Killing fields that a smooth metric on a 2-sphere $S^2$ can admit: Each of these Killing fields is necessarily a rotation of a unit round metric on $S^2$. This result is used in sections \ref{s4} and \ref{s5}.

Our conventions are the following. We work with 4-dimensional space-times $\scrM$ with metrics $g_{ab}$ of signature -,+,+,+. The torsion-free derivative operator compatible with the metric is denotes by $\nabla$ and the curvature tensors are defined by $R_{abc}{}^d\, V_d := 2 \nabla_{[a} \nabla_{b]}\, V_c; \, R_{ac} :=  R_{abc}{}^{b}$ and $R= g^{ab}\, R_{ab}$. For simplicity we will assume that $\scrM$ and all fields under consideration are smooth.

\section{Preliminaries}
\label{s2}

This section is divided into three parts. In the first we recall the basic notions and the associated structures which will be used in the subsequent discussion. In the second, we summarize the notion of multipoles that characterize the shape and spin structure of DHSs in an invariant fashion. This notion will be used in section \ref{s5}. In the third we collect the known results on restrictions on space-time isometries imposed by the presence of a DHS. These results are extended in sections \ref{s3} and \ref{s4}.

\subsection{Basic notions}
\label{s2.1}
Let $\Surs$ be a closed spacelike 2-dimensional sub-manifold in a
4-dimensional space-time $(\scrM, g_{ab})$.  Let $q_{ab}$ be the
intrinsic metric on $\Surs$, $\epsilon_{ab}$ the area 2-form, and
let $\nabla\!_{a}$ be the spacetime derivative operator compatible with
$g_{ab}$.  Since $\Surs$ is spacelike, it admits two future directed
null normal vector fields $k^a$ and $\uk^a$, defined up to rescalings
by positive definite functions.  Without loss of generality we can
rescale them so that $k^a\uk_a =-1$. (All our conclusions will be 
insensitive to the remaining rescaling freedom $k^a \to f k^a$ and 
$\uk_a \to f^{-1}\,\uk^a$, for some positive function $f$ on $\Surs$.) The expansions $\Theta_{(k)}$ and $\Theta_{(\uk)}$ of $k^a$ and $\uk^a$, are defined as
\begin{equation}
  \Theta_{(k)} = q^{ab}\nabla_a k_b\,,\quad  \Theta_{(\uk)} = q^{ab}\nabla_a \uk_b\, , 
\end{equation}
where the inverse metric $q^{ab}$ on $\Surs$ is used to project space-time fields onto $\Surs$. We shall also need the shear $\sigma^{(k)}_{ab}$ of $k^a$ which is a symmetric trace-free tensor that, together with $\Theta_{(k)}$, determines the projection of the symmetrized derivative of $k^a$ to $\Surs$:
\begin{equation}
  \sigma_{ab}^{(k)} = \left(q_a{}^c q_b{}^d\, -\, \frac{1}{2}q_{ab} q^{cd}\right) \nabla_{(c} k_{d)} \,.
\end{equation}
\smallskip

\begin{defn}
 A closed, 2-dimensional spacelike submanifold $\Surs$\, of\, $(\scrM, g_{ab})$ is said to be a \emph{marginally trapped surface} (MTS) if the expansion of one of the null normals, say $k^a$, vanishes, i.e. if\, $\Theta_{(k)} =0$.  \end{defn}

Note that no restriction is placed on $\Theta_{(\uk)}$. Although it is not necessary for our considerations, in asymptotically flat space-times, $k^a$ is often taken to be the `outward pointing' null normal (and denoted by $\ell^a$), while $\uk^a$ is taken to be `inward pointing' (and denoted by $n^a$). 

World-tubes of MTSs provide a notion of Dynamical Horizon segments \cite{ashtekar2025quasilocalblackholehorizons} that is central to out discussion:
\begin{defn} A connected 3-dimensional sub-manifold $\DHS$ of a spacetime $(\scrM,g_{ab})$ which is nowhere null is said to be a Dynamical Horizon Segment (DHS) if
  \begin{enumerate}
  \item $\DHS$ 
  is foliated by MTSs $\Surs$, each with topology $S^2$. {\rm The future pointing null normal 
    to the MTSs with vanishing expansion will be denoted by $k^a$; {\smash{$\Theta_{(k)} = 0$.}}
        }
  \item If $\uk^a$ is the other future pointing null normal to the MTSs $\Surs$, then $\Theta_{(\uk)}$ is nowhere vanishing. 
      \item The quantity
    \begin{equation}
      \E = \sigma_{ab}^{(k)}\sigma^{ab}_{(k)} + R_{ab}k^ak^b\,,
    \end{equation}
    does not vanish identically on any MTS $\Surs$. {\rm Here $R_{ab}$ is the space-time Ricci tensor.}
  \end{enumerate}
\end{defn}
\noindent Note that since $\DHS$ is connected, it is either spacelike or timelike everywhere, and $\Theta_{(\uk)}$ is either positive or negative everywhere on $\DHS$. In applications, DHSs arise both as manifolds with and without boundary. For details, see \cite{ashtekar2025quasilocalblackholehorizons}. We will assume that $k^a$ and $\uk^a$ are normalized such that $k^a \uk_a =-1$.
\vskip0.1cm

{\bf Remarks:} 
\vskip0.1cm

1. If $k^a$ were null, then the 3-manifold $\DHS$ would have been null, i.e., would have been a non-expanding horizon (NEH). The NEHs are well suited to describe equilibrium situations in which there is no energy flux into the BH \cite{Ashtekar:2000hw,Ashtekar:1998sp,Ashtekar:2004cn}. In particular the event horizons of \emph{stationary} black holes are NEHs. The DHSs discussed in this paper represent the complementary situation in which the horizon geometry is time-dependent in response to infalling matter or gravitational waves.
\vskip0.1cm

2. \emph{Dynamical horizons} introduced in \cite{Ashtekar:2002ag,Ashtekar:2003hk} were more restrictive than the DHSs of {Definition 2} in that they were required to be \emph{spacelike} sub-manifolds, and furthermore, the expansion $\Theta_{(\uk)}$ of the other null normal was required to be \emph{strictly negative}. These restrictions were motivated by the properties of horizons that arise in BH mergers: The DHSs of the progenitors as well as those of the remnants satisfy these additional conditions almost everywhere in their long history, except very close to the merger. However, timelike DHSs do arise in some interesting situations such as in the Oppenheimer-Snyder collapse in classical general relativity, and also during the BH evaporation process in quantum gravity. These situations are encompassed by DHSs but not by dynamical horizons.% 
\footnote{Also, even when the DHS is spacelike, $\Theta_{(\uk)}$ can be positive if the DHS separates an anti-trapped region from an untrapped region; this occurs for example in a process in which a white hole evaporates by emitting null fluid carrying positive energy. For a discussion of various situations in which one has DHSs that are not dynamical horizons, see section III.A of \cite{ashtekar2025quasilocalblackholehorizons}.}
On the other hand, condition 3 of {Definition 2} does not feature in the definition of dynamical horizons. This condition serves to eliminate exceptional situations \cite{Senovilla:2003tw} in which dynamical horizons exist although the space-time does not have a BH connotation because there are no trapped regions. (These space-times are exceptional in that all curvature invariants vanish on them.) Thanks to condition 3, such space-times do not admit DHSs. 

In this paper we work with DHSs, \emph{not} dynamical horizons. An immediate consequence of {Definition 2} is:

\begin{lemma}
  \label{lemma1}
On any DHS $\DHS$, the area of the MTSs is a monotonic function. 
\end{lemma}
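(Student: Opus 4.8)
The plan is to express the rate of change of area along the foliation as a single integral over a cross-section and to show that its integrand cannot change sign. Fix a smooth function $r$ on $\DHS$ whose level sets are the MTSs $\Surs_r$, and let $V^a$ be the vector field tangent to $\DHS$ that is orthogonal to each leaf $\Surs_r$ and normalized by $V^a\nabla_a r = 1$; it is smooth and unique, and at each point it lies in the plane spanned by the two null normals $k^a$ and $\uk^a$. Because $\DHS$ is nowhere null and each $\Surs_r$ is spacelike, $V^a$ is nowhere null. Its flow carries $\Surs_r$ to $\Surs_{r'}$, so the area $A(r) := \oint_{\Surs_r}\epsilon$ satisfies $\tfrac{dA}{dr} = \oint_{\Surs_r}(q^{ab}\nabla_a V_b)\,\epsilon$, the rate at which the area element of $\Surs_r$ changes along the flow of $V^a$ being precisely its expansion $q^{ab}\nabla_a V_b$.

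Next I would decompose $V^a$ in the null frame. With $k^a\uk_a = -1$ one has $V^a = \alpha\,k^a + \beta\,\uk^a$ where $\alpha = -V^a\uk_a$ and $\beta = -V^a k_a$ are smooth on $\DHS$. If $\alpha$ vanished at a point, $V^a$ would be proportional to $\uk^a$ there and hence null; similarly for $\beta$. So $\alpha$ and $\beta$ are each nowhere zero on $\DHS$, and, $\DHS$ being connected, each has a constant sign along $\DHS$. Using $q^{ab}k_b = q^{ab}\uk_b = 0$ and condition~1 of Definition~2, which gives $\Theta_{(k)} = 0$ on every $\Surs_r$, one finds $q^{ab}\nabla_a V_b = \alpha\,\Theta_{(k)} + \beta\,\Theta_{(\uk)} = \beta\,\Theta_{(\uk)}$, so that $\tfrac{dA}{dr} = \oint_{\Surs_r}\beta\,\Theta_{(\uk)}\,\epsilon$.

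Finally I would invoke condition~2 of Definition~2: $\Theta_{(\uk)}$ is nowhere vanishing on $\DHS$, hence of constant sign there by connectedness. The integrand $\beta\,\Theta_{(\uk)}$ is then a product of two fixed-sign functions, so it keeps one sign over all of $\DHS$; therefore $\tfrac{dA}{dr}$ has that same fixed, nonzero sign for every $r$, and $A(r)$ is (strictly) monotonic, as claimed. I do not expect a genuine obstacle here. The only point needing care is the opening identity $\tfrac{dA}{dr} = \oint_{\Surs_r}(q^{ab}\nabla_a V_b)\,\epsilon$, together with the bookkeeping of the residual freedoms — the rescaling $k^a\to f k^a,\ \uk^a\to f^{-1}\uk^a$ and the choice of orientation of $r$ — which only flip the overall sign of $\tfrac{dA}{dr}$, i.e.\ interchange ``area increasing'' with ``area decreasing'' along the chosen direction of the foliation, without affecting monotonicity. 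Condition~3 of Definition~2 plays no role in the argument.
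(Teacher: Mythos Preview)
Your proof is correct and follows essentially the same approach as the paper: decompose the leaf-orthogonal, foliation-preserving vector field $V^a$ along the null normals, use $\Theta_{(k)}=0$ to reduce the area-rate to $\oint \beta\,\Theta_{(\uk)}\,\epsilon$, and conclude strict monotonicity from the fixed signs of $\beta$ and $\Theta_{(\uk)}$. Your version is slightly more explicit than the paper's in justifying why the coefficients $\alpha,\beta$ are nowhere vanishing (via non-nullity of $V^a$) and in noting that condition~3 is not used.
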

\begin{proof}
 Let $V^a$ be a vector field on $\DHS$ that is everywhere
orthogonal to each MTS, and is such that the flow generated by $V^a$ preserves the foliation. Then $V^a = f_1 k^a + f_2 \uk^a$ for some nowhere vanishing functions $f_1$ and $f_2$. If $\DHS$ is spacelike (respectively, timelike) then $V^a$ is spacelike (respectively, timelike). Let $\epsilon_{ab}$ be the volume 2-form on each $\Surs_\lambda$. Then
\begin{equation} \label{epsilondot}
\Lie_{V}\epsilon_{ab} = f_2\,\Theta_{(\uk)}\,\epsilon_{ab}\,.
\end{equation}
Let us denote leaves of the foliation by $\Surs_\lambda$ using an affine parameter $\lambda$ of $V^a$ that is constant on each MTS. Since $f_2\Theta_{(\uk)}$ is nowhere vanishing, by integrating (\ref{epsilondot}) on $\Surs_\lambda$, we conclude that $dA_\lambda/d\lambda$ is either strictly positive or strictly negative.
\end{proof}

Monotonicity of area will play a key role in sections \ref{s3} and \ref{s4}.

\subsection{Multipoles of DHSs}
\label{s2.2}

Multipole moments are a set of numbers that provide an invariant characterization of the shape and spin structure of the MTSs $\Surs$ of a DHS. These numbers change as one passes from one MTS to another and this time dependence enables one to study the evolution of the DHS geometry. For example, numerical simulations show that while the MTSs on the common DHS are highly distorted and their spin structure has a strong time dependence immediately after a common DHS forms in a BH merger, this DHS settles down to a Kerr isolated horizon segment quite quickly. However, these descriptions generally depend on coordinate choices. Using multipole moments, one can provide an invariant description of the passage to the final equilibrium state \cite{Ashtekar:2013qta,Chen_2022}. 

Because each MTS $\Surs$ has 2-sphere topology, the metric $\t{q}_{ab}$ on it is conformally related to the \emph{round} 2-sphere metric. The conformal factor, that carries information about distortions, is completely determined by the scalar curvature $\t\R$ of $\t{q}_{ab}$. Therefore, the gauge invariant information about shape of $\Surs$ is encoded in  $\t\R$. Information about its spin --i.e., intrinsic angular momentum-- is encoded in (the exterior derivative of) the so-called `rotational 1-form' \cite{Ashtekar:2013qta}
\be \label{tomega}\t\omega_a := -\t{q}_a{}^b\, \uk_c \nabla_b k^c = \t{q}_a{}^b\, K_{bc}\, \h{r}^c\, , \ee 
where $K_{ab}$ is the extrinsic curvature of $\DHS$ in $(\scrM, g_{ab})$, $\h{r}^c$ is the unit normal to $\Surs$ within $\DHS$, and we have chosen a rescaling of the null normals so that $k^a \h{r}_a = \f{1}{\sqrt{2}}$. To illustrate this encoding of angular momentum, let us consider a partial Cauchy surface that extends to spatial infinity and has an inner boundary $\Surs$, and a diffemorphism generated by a vector field $\varphi^a$ which is a rotational Killing field on $\Surs$. Then, in the Arnowitt-Deser-Misner phase space, the horizon contribution to the Hamiltonian generating diffeomorphisms along $\varphi^a$ --or the value of the angular momentum charge at $\Surs$--  is given by $J_{\varphi} [\Surs] = - \f{1}{8\pi G}\, \oint_{\Surs} \t\omega_a \varphi^a\, \rmd^2 \t{V}$. (Using the divergence-free property of $\varphi^a$, it follows that $J_{\varphi} [\Surs]$ depends only on the exterior derivative of $\t\omega_a$.) Axisymmetric DHSs arise in an axisymmetric collapse as well as in a head on collision of black holes. 

With this application in mind, we will now summarize the procedure to define multipole moments of  an axisymmetric DHS $\DHS$ \cite{Ashtekar:2004gp,Schnetter:2006yt}, i.e. of a DHS on which the intrinsic metric $\t{q}_{ab}$ on each MTS $\Surs$ is axisymmetric.%
\footnote{Multipole moments are also defined on general DHSs without reference to axisymmetry \cite{Ashtekar:2013qta} but the procedure is much more involved. In the axisymmetric case the general procedure coincides with the simpler one discussed here.}
For reasons discussed in sections \ref{s2.3} and \ref{s4}, we can assume that the rotational Killing field $\varphi^a$ is tangential to every MTS $\Surs$ of $\DHS$. Then, using $\varphi^a$ and the intrinsic metric $\t{q}_{ab}$ on $\Surs$, one can introduce invariantly defined coordinates  $\zeta \in [-1,1]$ and $\varphi \in [0, 2\pi)$ as follows:  $\zeta$ is given by $\tilde{D}_a\zeta = R^{-2}\, \tilde{\epsilon}_{ba}\varphi^b$, where $R$ is the area radius of $S$ and $\varphi$ is an affine parameter of $\varphi^a$ such that $\t{q}^{ab} \t{D}_a\varphi \t{D}_b \zeta =0$. ($\zeta$ is the analog of the function $\cos\theta$ in usual spherical coordinates.) The freedom of adding a constant to $\zeta$ is removed by requiring $\oint_S\zeta\,\,\tilde{\epsilon}_{ab} = 0$,  but there remains a freedom to shift $\varphi$ by a rigid translation. In axisymmetric space-times, the DHS multipoles are insensitive to this freedom. 

In terms of these coordinates, the 2-metric $\tilde{q}_{ab}$ takes the canonical form \cite{Ashtekar:2004gp}:
\begin{equation}
  \label{eq:canonical-2metric}
  \t{q}_{ab} = R^2\,(f^{-1}\tilde{D}_a\zeta\tilde{D}_b\zeta + f\tilde{D}_a\varphi\tilde{D}_b\varphi)\,,
\end{equation}
where $f\,=\, R^{-2}\, \varphi_a \varphi^a$. It is easy to check that $\t{q}_{ab}$ is a round 2-sphere metric if  $f =1 - \zeta^2$. Now, by inspection, the area-element corresponding the generic axisymmetric $\t{q}_{ab}$ of Eq.~(\ref{eq:canonical-2metric}) is independent of $f$ and is therefore the same as that of the round 2-sphere. Multipole moments are defined using the spherical harmonics $Y_{\ell, m}$ of the round 2-sphere metric as weight functions.  The inner-products between the $Y_{\ell, m}$ computed using the physical 2-metric $\t{q}_{ab}$ are the same as for the standard spherical harmonics on a round 2-sphere, since the two area elements are the same. With this structure on hand one can define `shape' multipoles $\texttt{I}_\ell$ and `spin' multipoles $\texttt{S}_\ell$ \cite{Ashtekar:2013qta}:
\ba  \label{dimensionless2} \texttt{I}_{\ell,\,m} [\Surs]\, &:=& \textstyle{\f{1}{4}}\, \oint_{\Surs} \big[\,\t\R]\,Y_{\ell,\,m} (\theta,\varphi) \\
\texttt{S}_{\ell,\, m} [\Surs] &:=& \textstyle{\f{1}{4}}\, \oint_{\Surs} \big[2 \,\t\epsilon^{ab} \t{D}_a \t\omega_b\,\big]\, Y_{\ell,\,m} (\theta,\varphi) \, \rmd^2 \t{V}\, . \ea
where, $\theta$ is defined via $\zeta= cos\theta$ and, as usual, $\ell \in [0, \infty)$ and $m\in [-\ell, \ell]$. The $\texttt{I}_{\ell,m}$ and  $\texttt{S}_{\ell,m}$ are dimensionless and called  \emph{geometrical} multipoles. To define the mass and angular momentum multipoles  $\texttt{M}_{\ell,m}, \texttt{J}_{\ell,m}$, one introduces an `effective mass surface density' and an `effective spin-current density' on MTSs  using an analogy with electrodynamics. The final expressions of $\texttt{M}_{\ell,m}[\Surs],\, \texttt{J}_{\ell,m}[\Surs]$ are just rescalings of  $\texttt{I}_{\ell,\,m}[\Surs],\, \texttt{S}_{\ell,\, m} [\Surs]$ with appropriate dimensional factors \cite{Ashtekar:2004gp}. Therefore, in this note we will work with the simpler geometrical multipoles. In the axisymmetric case now under consideration, all multipoles with $m\not=0$ vanish. Consequently, as noted above, the freedom to make rigid shift in $\varphi$ does not affect the multipoles.

These multipoles have a number of physically expected properties:\\
(i) The mass dipole and angular momentum monopole vanish identically on all MTSs.\\
(ii) If fields on an MTS are spherically symmetric,  all multipoles except the mass monopole vanish. If they are reflection-symmetric, (as on isolated horizons of the Kerr family), then  all $\texttt{M}_\ell$ vanish for odd $\ell$ and all $\texttt{J}_\ell$ vanish for even $\ell$.\\
(iii) As pointed out above, in the standard Hamiltonian framework for partial spacelike surfaces, angular momentum $J_{(\varphi)}[\Surs]$ arises as the `horizon surface charge' defined by the canonical transformation generated by the symmetry vector field $\varphi^a$. That property is carried over to all higher angular momentum multipoles as well. The $\texttt{S}_{\ell,\, 0} [\Surs]$ are the horizon surface charges corresponding to the canonical transformations generated by vector fields $\t\epsilon^{ab} D_a Y_{\ell, 0}(\zeta)$ (up to overall constants).

\vskip0.1cm

\subsection{Known restrictions on isometries}
\label{s2.3}

In this section we will briefly summarize the results on permissible isometries in neighborhoods of DHSs that have appeared in the literature. 

The first analysis \cite{Mars:2003ud} showed that \emph{strictly stationary spacetimes cannot contain closed trapped or closed marginally trapped surfaces}. Here the term `closed' means compact and without boundary, and the phrase `strictly stationary space-times' refers to space-times that admit a Killing field whose norm is \emph{strictly} negative. As emphasized in that paper, both these conditions are indispensable for the method used in the proof. For example, even Minkowski space admits MTSs that are not closed. Similarly, there are space-times that admit null Killing fields and multitude of MTSs \cite{Senovilla:2003tw}, and of course the Kerr family admits MTSs as well as a  Killing field that is stationary in an asymptotic region but is not \emph{strictly} stationary on the full space-time. From the perspective of dynamical horizons used in this paper, this result implies that a strictly stationary space-time does not admit a DHS, providing support for the intended connotation that DHSs arise only in dynamical situations. However, the result is more general in that it refers only to a 2-surfaces, not to its `time evolution'. In particular, it also implies that strictly stationary space-times cannot admit non-expanding horizons (NEHs) which are null and do not have a dynamical connotation.

The second set of results \cite{Ashtekar:2005ez} arose as collorolies of general theorems on uniqueness of dynamical horizons a la \cite{Ashtekar:2002ag,Ashtekar:2003hk}, rather than the DHSs considered here. But the first of these results holds also for DHSs because it did not make use of the more restrictive properties of dynamical horizons. This uniqueness theorem refers to the foliation of a DHS by the MTSs: It shows that the foliation is unique. A corollary of this uniqueness is that if the space-time metric admits a Killing field $\xi^a$ that is everywhere tangential to a DHS $\DHS$, then it must be tangential to each MTS $\Surs$ in $\DHS$. 

However, the subsequent constraints on permissible isometries made a crucial use of a second theorem that holds only on \emph{regular} dynamical horizons $H$; these are spacelike, with $\Theta_{(\uk)} <0$ everywhere, and with a \emph{nowhere} vanishing $\mathcal{E}$. This theorem guarantees that, if the null energy condition holds, then there is no MTS which lies strictly in the past domain of dependence of a regular dynamical horizon $H$ \cite{Ashtekar:2005ez}. A corollary of this result was then noted: since existence of a Killing field  that is everywhere transverse to a MTS $\Surs$ in $H$ would contradict the theorem, such Killing fields cannot exist in a neighborhood of a regular dynamical horizon. 

These results considered Killing fields $\xi^a$ that are either tangential to the given DHS or transverse to MTSs $\Surs$. But they leave open the possibility that $\xi^a$ may be tangential to a closed subset of $\Surs$ and transverse to its complement. This possibility is discussed in \cite{Andersson:2007fh} using of properties of the stability operator of MTSs. (For a concise discussion of the stability operator see, e.g., section IV of \cite{ashtekar2025quasilocalblackholehorizons}). Consider a MTS $S$ in a region of space-time that admits a Killing field $\xi^a$ ($S$ need not lie on a DHS $\DHS$). Denote by $\xi^a_{\perp}$ the projection of $\xi^a$ orthogonal to $S$. Suppose that $S$ is stable along a direction parallel to $\xi^a_{\perp}$. Then $\xi^a$ is either everywhere tangential to $S$ (so that $\xi^a_{\perp}=0$), or everywhere transversal to it \cite{Andersson:2007fh}. 

Taken together, these results imply that if space-time metric admits a Killing field $\xi^a$ in a neighborhood of a DHS $\DHS$, it must be tangential to every MTS $\Surs$ of $\DHS$ provided:\\
(i) the  DHS is spacelike with $\Theta_{(\uk)} <0$;\\ 
(ii) $\E$ is nowhere vanishing;\\
(iii) Null energy condition holds; and,\\
(iv) Its MTSs $\Surs$ are stable.\\
In sections \ref{s3} and \ref{s4} we will analyze the constraints on permissible symmetries without assuming these additional restrictions on the DHS.
\vskip0.1cm

\textbf{Remark:} There are also other results on the interplay between symmetries and the existence of MTSs $S$ that use the stability operator but do not refer to a DHS \cite{Carrasco:2009sa,Booth_2024}. However, their statements refer to partial Cauchy slices with the MTS $S$ as an inner boundary. Therefore they are not directly relevant to the considerations of this note.

\section{Absence of Causal Killing fields\\ in any neighborhood of $\DHS$}
\label{s3}

Since the intrinsic geometry of a DHS $\DHS$ is inherently time dependent, one might intuitively expect that the space-time metric would not admit a causal Killing field in a neighborhood of $\DHS$.
In this section we will show that this expectation is indeed borne out. The section is divided into two parts. In the first we consider the case when the $\xi^a$ is timelike and in the second we let $\xi^a$ be a general causal vector field.

\subsection{timelike Killing vectors}
\label{s3.1}

\begin{prop}
  \label{prop1}
  If a region $\scrR$ in a spacetime $(\scrM,g_{ab})$ admits a timelike Killing vector field, then $\scrR$ cannot contain a  complete MTS $\Surs$ of a DHS. In particular, then, it cannot contain a DHS.
\end{prop}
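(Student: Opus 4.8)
The strategy I would follow is to play the area monotonicity of Lemma~\ref{lemma1} against the fact that isometries preserve area. Suppose, for contradiction, that $\scrR$ contains a complete (i.e. entire $S^2$) MTS $\Surs$ of some DHS $\DHS$, and let $\xi^a$ be the timelike Killing field, which without loss of generality I take to be future directed. Since $\Surs$ is spacelike and $\xi^a$ is timelike, $\xi^a$ is nowhere tangent to $\Surs$, so along $\Surs$ we may decompose $\xi^a = X^a + \alpha\, k^a + \beta\, \uk^a$, with $X^a$ tangent to $\Surs$, $k^a$ the future null normal with $\Theta_{(k)}=0$, $\uk^a$ the other future null normal ($k^a \uk_a = -1$), and $\alpha = -\,\xi^a \uk_a$, $\beta = -\,\xi^a k_a$. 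The first point to record is that $\beta$ is nowhere zero on $\Surs$: a timelike vector cannot be orthogonal to the nonzero null vector $k^a$. Since $\Surs$ is connected, $\beta$ therefore has constant sign (in fact $\beta>0$, as $\xi^a$ and $k^a$ are future directed causal vectors).

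Next I would flow $\Surs$ along $\xi^a$. Because $\Surs$ is compact and lies in $\scrR$, the flow $\phi_t$ of $\xi^a$ is defined on a neighborhood of $\Surs$ for all sufficiently small $|t|$ and maps $\Surs$ into $\scrR$, and each $\phi_t$ is a local isometry of $(\scrM,g_{ab})$. An isometry carries MTSs to MTSs --- it preserves the conditions $\Theta_{(k)}=0$ and $\Theta_{(\uk)}\neq 0$ --- so $\phi_t(\Surs)$ is again a complete MTS, and, crucially, $A\big(\phi_t(\Surs)\big) = A(\Surs)$ for all such $t$.

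On the other hand, $t\mapsto \phi_t(\Surs)$ is a smooth deformation of $\Surs$ with deformation vector $\xi^a$, and the standard first variation of area --- exactly the computation underlying Eq.~(\ref{epsilondot}) in the proof of Lemma~\ref{lemma1}, now with $f_1=\alpha$, $f_2=\beta$, together with a tangential contribution from $X^a$ that is a total divergence on the closed surface $\Surs$ and hence integrates to zero --- gives
\begin{equation}
  \frac{\rmd}{\rmd t}\, A\big(\phi_t(\Surs)\big)\Big|_{t=0} \;=\; \oint_{\Surs} \big(\alpha\, \Theta_{(k)} + \beta\, \Theta_{(\uk)}\big)\, \epsilon \;=\; \oint_{\Surs} \beta\, \Theta_{(\uk)}\, \epsilon\, ,
\end{equation}
where we used $\Theta_{(k)}=0$. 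Since $\Surs$ is a leaf of a DHS, $\Theta_{(\uk)}$ is nowhere zero on it, hence (by connectedness) of constant sign; together with $\beta>0$ this forces the right-hand side to be nonzero, contradicting $A\big(\phi_t(\Surs)\big)\equiv A(\Surs)$. This proves that $\scrR$ contains no complete MTS of a DHS, and the last clause is then immediate: a DHS contained in $\scrR$ would have each of its ($S^2$, complete) MTS leaves contained in $\scrR$.

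The argument is short; the only steps needing a little care are (i) the non-vanishing and sign of $\beta$, which uses that $\Surs$ is spacelike and $k^a$ a nonzero null vector, and (ii) the bookkeeping in the first variation formula, ensuring the tangential part of $\xi^a$ contributes only a total divergence so that it is precisely the marginally-trapped condition $\Theta_{(k)}=0$ and the DHS condition $\Theta_{(\uk)}\neq 0$ that survive --- this is where the hypotheses genuinely enter. Equivalently, and perhaps more cleanly, for small $|t|$ the surfaces $\phi_t(\Surs)$ foliate a timelike $3$-submanifold $T\subset\scrM$ on which the projection of $\xi^a$ orthogonal to the leaves is a foliation-preserving vector field of the kind used in Lemma~\ref{lemma1}; that lemma (whose proof invokes only conditions 1 and 2 of Definition~2) then makes the leaf-area strictly monotonic along $T$, again contradicting its constancy under isometries.
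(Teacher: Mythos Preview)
Your proof is correct and follows essentially the same approach as the paper's: flow $\Surs$ by the isometry, observe that area is preserved, and derive a contradiction from the fact that the DHS conditions force the area to change monotonically. The only cosmetic difference is that the paper packages the area-change step by first noting that the swept-out $3$-manifold $\mathfrak{H}=\bigcup_\lambda \Surs_\lambda$ is itself a DHS and then invoking Lemma~\ref{lemma1}, whereas you compute the first variation of area directly (and then mention the Lemma~\ref{lemma1} route as an equivalent alternative).
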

\begin{proof}
Let $\xi^a$ be the Killing vector field in the region $\scrR$, and suppose that $\scrR$ contains a complete MTS $\Surs$ of a DHS with null normals $(k^a,\uk^a)$.  Let $\lambda$ be an affine parameter along $\xi^a$, and choose $\lambda=0$ on $\Surs$.  Let $\Surs_\lambda$ be the image of $\Surs$ under the isometry generated by $\xi^a$ for all $\lambda\in (-\delta,\delta)$, and denote again by $(k^a,\uk^a)$ the null normals to $\Surs_\lambda$ transported from $\Surs$ using the isometry. Since $\Surs$ is an MTS within a DHS $\DHS$, $\E$ does not vanish identically on $\Surs$. Let us choose a sufficiently small $\delta$ such that $\E$ does not vanish identically on any  $\Surs_\lambda$. Let $\mathfrak{H} = \bigcup_\lambda \Surs_\lambda$  denote the 3-manifold obtained by taking the union of all the $\Surs_\lambda$. Then, $\Theta_{(k)} = 0$ on each $\Surs_\lambda$ and $\Theta_{(\uk)}$ is nowhere vanishing on $\Surs_\lambda$. Thus, $\mathfrak{H}$ is also a DHS. Hence, by Lemma \ref{lemma1}, the area of the 2-spheres $\Surs_\lambda$ must increase or decrease monotonically in $\lambda$. On the other hand, since any $\Surs_\lambda$ is the image of $\Surs$ under an isometry, its area has to equal that of $\Surs$. Thus we would have a contradiction if $\Surs$ were to lie in $\scrR$. In particular, a DHS cannot lie in a stationary region of spacetime.
\end{proof}
\goodbreak

\textbf{Remarks:}
\vskip0.1cm
1. Our argument used only conditions 1 and  2 of {Definition 2}; we did not need the condition on $\mathcal{E}$ in Definition 2 of a DHS. 
\vskip0.1cm

2. As noted above, this result was obtained in \cite{Ashtekar:2005ez} for \emph{regular} dynamical horizons $H$: Thus, $H$ was assumed to be spacelike, with $\Theta_{(\uk)} <0$ everywhere and $\mathcal{E}$ \emph{nowhere} vanishing. Also, the proof was indirect.  The present proof, by contrast, requires fewer assumptions and directly brings out the essential reason that forbids the existence of a timelike Killing field.
\vskip0.1cm

3. After this work was completed, we found out that Mars and Senovilla already had a Theorem for `strictly stationary space-times' \cite{Mars:2003ud}.  This work (discussed in section \ref{s2.3})
implies the result of Proposition \ref{prop1}. Their argument is similar to the one presented above. Their result is more general because they rule out trapped surfaces without reference to a DHS. But, as we show in section \ref{s3.2} for the MTSs that do lie on a DHS, one can extend our result of Proposition \ref{prop1} to include causal Killing vectors. Also, as we discuss in section \ref{s4}, our setting enables one to extend the analysis to include spacelike Killing vectors as well.

\subsection{General causal Killing fields}
\label{s3.2}

In this section we will generalize the previous discussion by considering Killing fields $\xi^a$ that are causal in the region $\scrR$. Thus now $\xi^a$ will be allowed to be \emph{either} timelike \emph{or} null in a neighborhood of a MTS $\Surs$ in $\DHS$; it can change its character even on $\Surs$, so long as it remains causal.  

At any point $p$ of an MTS $\Surs$ in $\DHS$, let $T_p\Surs$ be the tangent space to $\Surs$ and let $\mathbf{e}_{(i)}^a$ ($i=1,2$) be a basis in $T_p\Surs$. Let us first consider the case when $\xi^a$ is nowhere vanishing on an MTS $\Surs$. Since it is causal and $T_P\Surs$ is spacelike, $(\xi^a,\mathbf{e}_{(1)}^a,\mathbf{e}_{(2)}^a)$ span a 3-dimensional vector space, say $W_p$.  Note that $\xi^a$ is not necessarily orthogonal to $\Surs$. But being causal, it is transverse to $\Surs$. Therefore $W_p$ is either a timelike or null 3-flat but its signature can change from one point of $\Surs$ to another. (Note also that the causal nature of $W_p$ is determined not by the causal nature of $\xi^a\mid_p$, but rather by the causal nature of the normal $V^a$ to $T_p\Surs$ within $W_p$.) As before, let us propagate $\Surs$ along the flow generated by $\xi^a$ and obtain a 3-manifold $\mathfrak{H}$, naturally foliated by 2-spheres $\Surs_\lambda$. Then $W_p$ is mapped to $W_{p,\,\lambda}$ that spans the tangent space of $\mathfrak{H}$. Since the flow is an isometry, if $W_p$ is initially timelike (respectively, null), then $W_{p,\,\lambda}$ would remain timelike (respectively, null). Next, let us consider the surfaces $\Surs_\lambda$ generated by the flow along $\xi^a$.  Let us denote by $\mathcal{C}$ the closed subset of $\Surs$ where $W_p$ is null,\, by $\mathcal{O} = \Surs\setminus \mathcal{C}$ the open subset where $W_p$ is timelike, and let us denote their images under the flow generated by $\xi^a$ by  $\mathcal{C}_\lambda$ and $\mathcal{O}_\lambda$, respectively; see Fig.~\ref{fig:fig1}. (For simplicity, we have depicted the situation in which $\mathcal{C}$ and $\mathcal{O}$ are simply connected. In the more general case, one just repeats the argument for each connected component.)
\begin{figure}
  \includegraphics[width=0.6\columnwidth]{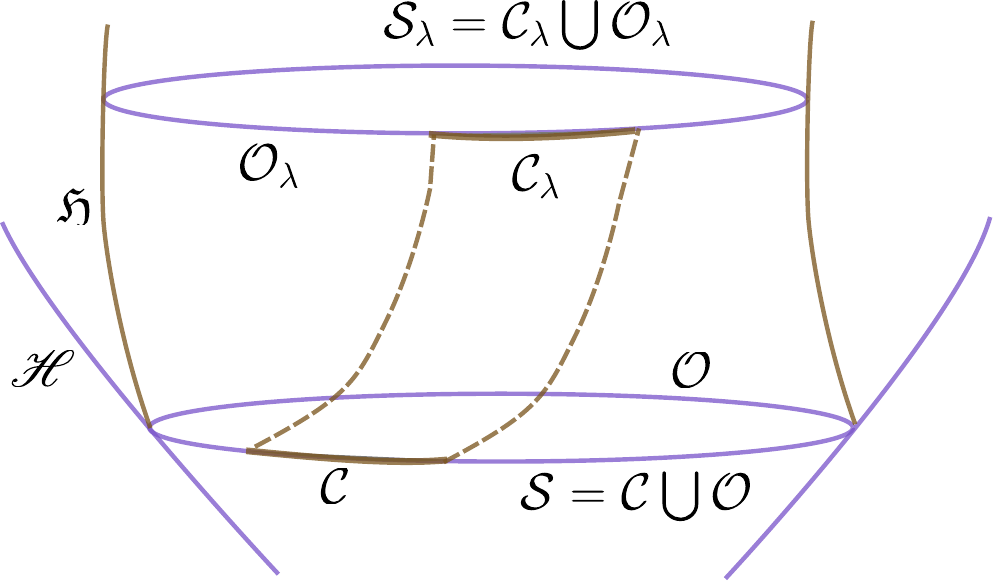}
  \caption{\footnotesize{Causal Killing field $\xi^a$: The dynamical horizon $\DHS$, and the 3-surface $\mathfrak{H}$ generated by the flow of $\xi^a$. A priori, $\xi^a$ could be null on a closed subset $\mathcal{C}$ of a MTS $\Surs$ in $\DHS$, and is timelike on its complement $\mathcal{O}$.}}
  \label{fig:fig1}
\end{figure}

Since $\Surs_\lambda = \mathcal{O}_\lambda \cup C_\lambda$,\, setting $\mathcal{O}_{\mathfrak{H}} := \bigcup_\lambda\mathcal{O}_\lambda$ and $C_{\mathfrak{H}} := \bigcup_\lambda C_\lambda$, we have $\mathfrak{H} = \mathcal{O}_{\mathfrak{H}} \cup C_{\mathfrak{H}}$. 
By construction, the flow generated by $\xi^a$ also preserves $\mathcal{O}_{\mathfrak{H}}$, as well as $\mathcal{C}_{\mathfrak{H}}$. Next, as before, let $V^a = f_1k^a + f_2\uk^a$ be the vector field on $\mathfrak{H}$ that is everywhere normal to the MTSs $\Surs_\lambda$, with affine parameter $\lambda$, that preserves the foliation of $\mathfrak{H}$ by  MTSs $\Surs_\lambda$. Then:\\ (i) $V^a$ is timelike on the open portion $\mathcal{O}_{\mathfrak{H}}$ of $\mathfrak{H}$, and null on the closed, null portion $\mathfrak{C}_{\mathfrak{H}}$; and,\\ (ii) The flow generated by $V^a$ also preserves $\mathcal{O}_{\mathfrak{H}}$, as well as $\mathcal{C}_{\mathfrak{H}}$.\\ 
Therefore we can use the same argument as in Proposition \ref{prop1}, but now using $\mathcal{O}_{\mathfrak{H}}$ --where $\xi^a$ is timelike-- in place of the full $\mathfrak{H}$. Again, on the one hand, the areas of $\mathcal{O}_\lambda$ must all be the same since they are isometrically related. On the other hand, the areas must be different because under the flow generated by $V^a$ the area either increases or decreases monotonically. Thus, we arrive at a contradiction unless $\mathcal{O}$ is empty, i.e., unless the 3-flats $W_p$ are everywhere null on $\Surs$ so that $C_{\mathfrak{H}} = \mathfrak{H}$.\, Thus $\mathfrak{H}$ must a null 3-manifold with $V^a$ as its null normal, foliated by MTSs $\Surs_\lambda$. 
\footnote{In particular, the causal Killing vector field $\xi^a$ must be null everywhere on any MTS $S$ of the DHS $\DHS$ under consideration.}

Let us investigate this case further. Since the null normal $V^a$ to $\mathfrak{H}$ is orthogonal to every MTS $\Surs_\lambda$, it must be proportional either to $k^a$ or $\uk^a$. If it were proportional to $\uk^a$, we would again have a contradiction involving areas of $\Surs_\lambda$ because the expansion $\Theta_{(\uk)}$ is everywhere negative or everywhere positive. Therefore, $V^a$ must be proportional to $k^a$ (for which the contradiction is avoided because $\Theta_{(k)} =0$). Thus $k^a$ is a null normal to $\mathfrak{H}$. We can use the freedom to rescale $k^a$ by a positive function to choose it to be a geodesic null normal. Then, by Raychaudhuri equation for $k^a$ we have
  \begin{equation} \label{Ray}
    \Lie_k\Theta_{(k)} = -\sigma_{ab}^{(k)}\sigma^{ab}_{(k)} - R_{ab}k^ak^b = -\E\,.
  \end{equation}
Since $\Theta_{(k)} =0 $ on every $\Surs_\lambda$, we would have $\E =0$ everywhere on $\mathfrak{H}$, and in particular on the MTS $\Surs$ which it shares with the given DHS $\DHS$. But by assumption 3 of Definition 2 of DHS, $\E$ cannot vanish identically on $\Surs$. Thus, again we would have a contradiction. Therefore we conclude that a space-time region $\scrR$ cannot admit causal Killing field that is nowhere vanishing on a DHS. 

Finally, let us consider the remaining case in which $\xi^a$ vanishes somewhere on $\DHS$. If it is nowhere vanishing \emph{on any one} MTS $\Surs$, by the argument given above, we would have a contradiction. Also, using the fact that a Killing vector field is completely characterized by its `Killing data' $(\xi^a, \nabla_a \xi_b)_p$ at any point, and that $\nabla_a \xi_b$ is anti-symmetric, it follows that $\xi^a$ cannot vanish on any open set of a 3-dimensional submanifold of $\scrM$ \cite{Ashtekar1978}. These two properties of $\xi^a$ imply that the subset $\mathcal{C}_\circ$ of $\Surs$ on which $\xi^a$ vanishes can be at most one dimensional, for any MTS $\Surs$. We can apply the above argument to the complement $\mathcal{O}_\circ$ of $\mathcal{C}_\circ$ to conclude that $W_p$ must be (3-dimensional and) null, so that the image of $\mathcal{O}_\circ$ under the flow generated by $\xi^a$ constitutes a null 3-manifold on which $\E$ vanishes everywhere. Therefore $\E$ must vanish on every MTS $\Surs$ except possibly on a one or zero dimensional subset thereof. Then by continuity $\E=0$ on every $\Surs$, which contradicts the assumption that $\Surs$ is an MTS in a DHS. Thus, even if one allows for the possibility for the Killing field $\xi^a$ to vanish on $\DHS$, we arrive at a contradiction.

Therefore, we have:
\begin{prop}
\label{prop2}
If a region $\scrR$ of $\scrM$ admits a causal Killing vector field, then it cannot contain an entire MTS $\Surs$ of a DHS $\DHS$. In particular, then it cannot contain $\DHS$.

\end{prop}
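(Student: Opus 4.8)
The plan is to reduce the statement to the area–monotonicity argument behind Lemma~\ref{lemma1} and Proposition~\ref{prop1}, applied not to the whole Killing flow tube but to a suitable open sub-tube on which the relevant normal is timelike. First I would treat the generic case in which $\xi^a$ is nowhere zero on the MTS $\Surs$. Since $\xi^a$ is causal and $T_p\Surs$ is spacelike, $\xi^a$ is necessarily transverse to $\Surs$, so at each $p$ the span $W_p$ of $\xi^a$ and a basis of $T_p\Surs$ is a $3$-flat whose one-dimensional orthogonal complement of $T_p\Surs$ within it --- call it $V^a$ --- is timelike or null (never spacelike), with the signature possibly varying over $\Surs$. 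Dragging $\Surs$ along the flow of $\xi^a$ produces a $3$-manifold $\mathfrak{H}$ foliated by isometric copies $\Surs_\lambda$, with $\Theta_{(k)}=0$ and $\Theta_{(\uk)}\neq 0$ on each leaf; because the flow is an isometry it preserves both the partition of $\mathfrak{H}$ into the open set $\mathcal{O}_{\mathfrak{H}}$ (where $V^a$ is timelike) and the closed set $\mathcal{C}_{\mathfrak{H}}$ (where $V^a$ is null), and the foliation of each.

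On $\mathcal{O}_{\mathfrak{H}}$ the normal $V^a = f_1 k^a + f_2\uk^a$ is timelike and its flow preserves the foliation, so the argument of Lemma~\ref{lemma1} forces the areas of the leaves $\Surs_\lambda\cap\mathcal{O}_{\mathfrak{H}}$ to vary monotonically in $\lambda$; but they are isometric images of one another, hence of equal area --- a contradiction unless $\mathcal{O}$ is empty. Thus $\mathfrak{H}$ is a null $3$-manifold foliated by MTSs, with null normal $V^a$ orthogonal to every leaf, hence proportional to $k^a$ or to $\uk^a$. The case $V^a\propto\uk^a$ is again excluded by the area argument, since $\Theta_{(\uk)}$ is nowhere zero; so $V^a\propto k^a$, and I may rescale $k^a$ to be geodesic. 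The Raychaudhuri equation then gives $\Lie_k\Theta_{(k)} = -\sigma^{(k)}_{ab}\sigma_{(k)}^{ab} - R_{ab}k^ak^b = -\E$, and since $\Theta_{(k)}\equiv 0$ on every leaf of $\mathfrak{H}$ we obtain $\E\equiv 0$ on $\mathfrak{H}$, in particular on $\Surs$ --- contradicting condition~3 of Definition~2.

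Finally I would dispose of the degenerate case in which $\xi^a$ vanishes somewhere on $\DHS$. If $\xi^a$ is nowhere zero on some single MTS $\Surs$, the argument above already applies. Otherwise, using that a Killing field is determined by its Killing data $(\xi^a,\nabla_a\xi_b)$ at one point together with the antisymmetry of $\nabla_a\xi_b$, the zero set of $\xi^a$ cannot contain an open subset of any $3$-dimensional submanifold, so on each MTS $\Surs$ the zero set is at most one-dimensional; applying the preceding argument to its open, dense complement shows $\E$ vanishes off a set of dimension $\le 1$ in every $\Surs$, hence $\E\equiv 0$ on $\Surs$ by continuity --- again contradicting that $\Surs$ is an MTS of a DHS. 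Since any DHS contains an entire MTS, it follows a fortiori that $\scrR$ cannot contain $\DHS$.

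I expect the main obstacle to be the intermediate case: ruling out that $\xi^a$ is null exactly on a closed proper subset $\mathcal{C}\subset\Surs$ and transverse-timelike on its complement. The area argument only bites on the open timelike part $\mathcal{O}_{\mathfrak{H}}$, so one must verify carefully that the Killing flow genuinely preserves $\mathcal{O}_{\mathfrak{H}}$ together with its foliation (this is exactly where isometry-invariance of the causal type of $W_p$ enters), and then that the residual ``everywhere null'' alternative is killed not by areas but by combining the definite sign of $\Theta_{(\uk)}$ with the Raychaudhuri/$\E$ input from condition~3 of Definition~2 --- precisely the ingredient that lets the result go beyond the earlier timelike-only statements.
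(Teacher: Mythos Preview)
Your proposal is correct and follows essentially the same approach as the paper: split into the nowhere-vanishing and vanishing cases, partition $\Surs$ into the open set $\mathcal{O}$ where $W_p$ is timelike and the closed set $\mathcal{C}$ where it is null, run the area-monotonicity argument on $\mathcal{O}_{\mathfrak{H}}$ to force $\mathcal{O}=\emptyset$, then eliminate the null alternative via $\Theta_{(\uk)}\neq 0$ (ruling out $V^a\propto\uk^a$) and the Raychaudhuri equation together with condition~3 of Definition~2 (ruling out $V^a\propto k^a$); the treatment of the zero set via Killing data is also the same. Your anticipated ``main obstacle'' is exactly the intermediate case the paper singles out, and you handle it the same way.
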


\section{Spacelike Killing Vectors}
\label{s4}

For causal Killing fields $\xi^a$ considered in section \ref{s3} we could conclude that, on any MTS $\Surs$, the subspace $W_p$ of $T_p$ (spanned by $\xi^a$ and tangent vectors of $\Surs$) is either timelike or null. For spacelike Killing field $\xi^a$, on the other hand, we have more possibilities. First, $\xi^a$ may be tangential to $\Surs$ everywhere, in which case $W_p$ would be spacelike and 2-dimensional. This simple possibility occurs, in particular,  if the DHS is axisymmetric or spherically symmetric. By contrast, the case when (the spacelike Killing field) $\xi^a$ is allowed to be transverse (at least on a portion of a MTS $\Surs$) is more involved because it branches into several sub-cases. For, on the (open) portion of an $\Surs$ on which $\xi^a$ is transverse, $W_p$ would be 3-dimensional but would not be restricted to be timelike or null as before; it could also be spacelike. Also, since $\xi^a$ could be transverse only on a portion of $\Surs$, the dimensionality of $W_p$  can also vary from point to point. Consequently we now have several sub-cases to investigate.
            
Let us begin with the simplest case: $W_p$ is 3-dimensional and is timelike, null or spacelike \emph{everywhere} on $\Surs$. We can rule out this possibility simply by adapting the arguments used in \ref{prop1} and \ref{prop2}. For completeness we include a sketch of the proof. Consider as before the 3-surface $\mathfrak{H}$ obtained by transporting the initial MTS $\Surs$ along the flow generated by $\xi^a$. $\mathfrak{H}$ is foliated by MTSs $\Surs_\lambda$ all of which must have the same area because $\xi^a$ is a Killing field. Furthermore, the signature of $W_p^{(\lambda)}$ is preserved along this flow.  Thus, if $W_p$ were initially spacelike/timelike/null, then so would be $\mathfrak{H}$. In the spacelike and timelike cases, $\mathfrak{H}$ is a DHS. Therefore the area of the $\Surs_\lambda$ must be monotonic, which contradicts our conclusion that  areas are the same. In the case when $\mathfrak{H}$ is null, we use the Raychaudhuri equation, along with condition 3 in Definition 2 (of a DHS), to arrive at a contradiction.

Complications arise when the signature of $W_p$ varies over $\Surs$, even when $W_p$ is 3-dimensional everywhere on $\Surs$.  A possible configuration is shown in Fig.~\ref{fig:fig2}.
\begin{figure}
  \includegraphics[width=0.6\columnwidth]{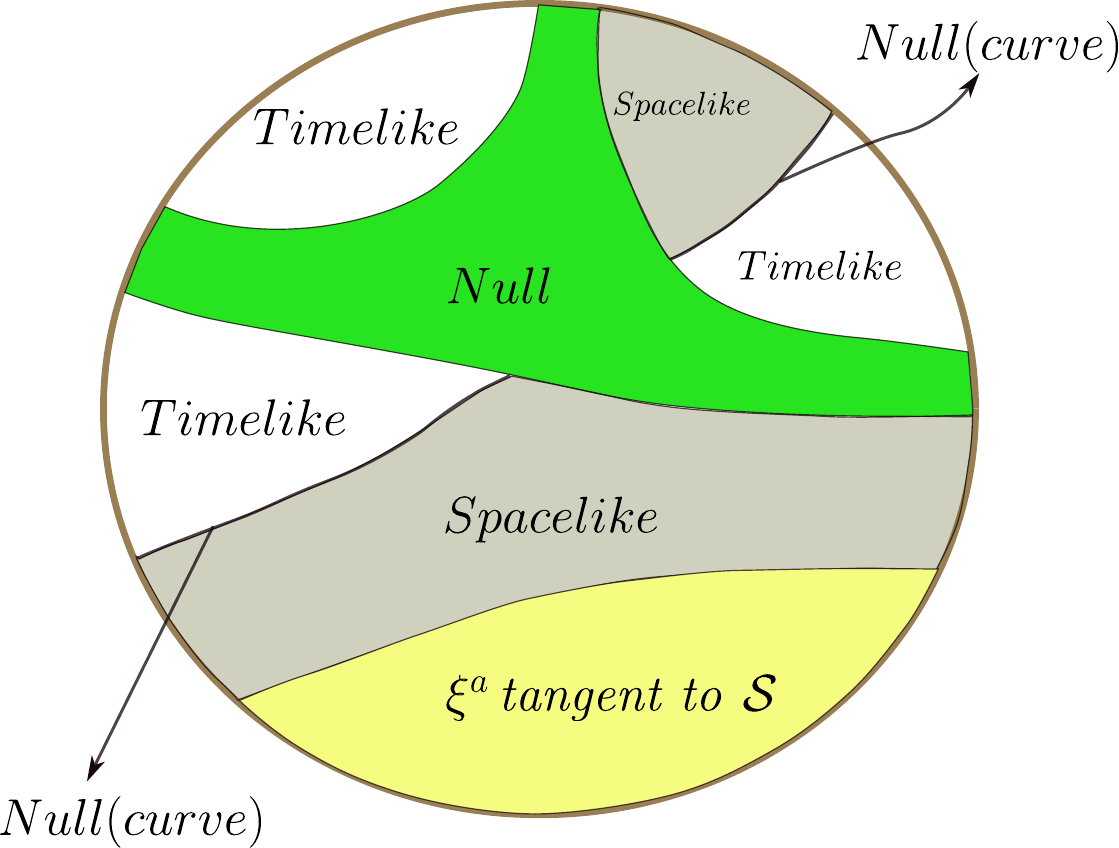}
\caption{\footnotesize{Spacelike Killing field $\xi^a$: An a priori possible configuration showing varying signature of $W_p$ on $\mathcal{S}$. $\xi^a$ is tangential on a portion of $\Surs$ where $W_p$ is 2-dimensional and spacelike. On portions to which $\xi^a$ is transverse, $W_p$ is 3-dimensional and can be timelike, null or spacelike.}}
  \label{fig:fig2}
\end{figure}
As before, denote by $\mathcal{C}$ the closed subset of $\Surs$ on which $W_p$ is null. The complement $\mathcal{O}$ may be either timelike or spacelike.  Moreover, if a connected component of $\mathcal{C}$ 
separates $\Surs$ into disconnected components, then connected components of $\mathcal{O}$ can have different signatures (as in Fig.~\ref{fig:fig2}).

Let us begin with the case when $\mathcal{C}$ consists of a single simply connected component. Then its complement $\mathcal{O}$ is also simply connected and either timelike or spacelike. Let us use the same notation as in Proposition \ref{prop2}. Then arguments used in Proposition \ref{prop2} imply that the union $\mathcal{C}_{\mathfrak{H}}$ of the images of $\mathcal{C}$ under the flow generated by $\xi^a$ is a 3-dimensional null sub-manifold that lies in $\mathfrak{H}$. By construction, $\mathcal{C}_{\mathfrak{H}}$ is preserved by the flow of $\xi^a$. On $\mathcal{C}_\mathfrak{H}$, the normal $V^a$ to the MTSs $\Surs_\lambda$ is null and it is tangential to the boundary $\partial \mathcal{C}_{\mathfrak{H}}$. Thus, as in Proposition \ref{prop2}, $\mathcal{C}_{\mathfrak{H}}$ and its complement $\mathcal{O}_{\mathfrak{H}}$ are preserved also by the flow generated by $V^a$. We can now repeat the previous argument, but using $\mathcal{O}_{\mathfrak{H}}$ in place of $\mathfrak{H}$. On the one hand, under the flow generated by $\xi^a$, the areas of $\mathcal{O}_\lambda$ must be the same as the area of $\mathcal{O} = \mathcal{O}_{\lambda=0}$.  On the other hand, they must differ because they are connected by the flow generated by the vector field $V^a$ normal to $\Surs_\lambda$ as well. Thus we arrive at a contradiction. Hence there cannot exist a spacelike Killing field $\xi^a$ that is everywhere transverse to a MTS $\Surs$ such that the the 3-flats $W_p$ it spans together with the tangent vectors to $\Surs$ are null on a single connected component of $\Surs$.  

If $\mathcal{C}$ is not simply connected, there would be several disconnected components of $\mathcal{O}$ (as in Fig.~\ref{fig:fig2}). These could be either timelike or spacelike. The area of the timelike portion decreases while the area of the spacelike portion will increase. Thus, the \emph{total} area of $\mathcal{O}_\lambda$ may remain unchanged. At first this may seem like an obstacle. However, we can still apply the above argument to a single connected component on which $W_p$ is either exclusively timelike or exclusively spacelike.  Arguing as above, we again arrive at a contradiction involving the area of this patch under the flow generated by $\xi^a$. Thus, the tangent space $W_p$ --spanned by the Killing field $\xi^a$ together with the the 2-d tangent space $T_\Surs$-- cannot be 3-dimensional everywhere on any MTS $\Surs$ of $\DHS$.   

Finally, the Killing vector field $\xi^a$ may be tangential on a closed subset $\t{\mathcal{C}}$ of every  MTS $\Surs$, and transverse elsewhere. In this case, $W_p$ would be 2-dimensional on $\t{\mathcal{C}}$ and 3-dimensional on its complement $\t{\mathcal{O}}$ (and $\t{\mathcal{C}}$ and  $\t{\mathcal{O}}$ are left invariant also by flows generated by $\xi^a$ as well as $V^a$). Now we can focus just on each connected component of $\t{\mathcal{O}}$ which would have a definite signature (i.e. would be null, timelike or spacelike everywhere). Repeating the argument used in Proposition \ref{prop2} on each component,  we would arrive at a contradiction unless the 3-manifold $\t{\mathcal{O}}_{\mathfrak{H}} = \bigcup_\lambda\, \t{\mathcal{O}}_\lambda$ is null everywhere. In that case, as in Proposition \ref{prop2} we can conclude that $\E$ must vanish on each $\t{\mathcal{O}}$. But now we do not have a contradiction because $\E$ only has to be non-vanishing \emph{somewhere} on $\Surs$ and it could be non-zero on the complement $\t{\mathcal{C}}$ of $\t{\mathcal{O}}$. Therefore an additional condition is needed. There are two alternatives.

First, let us restrict ourselves to DHSs on which $\E$ does not vanish on an open set of any MTS $\Surs$ of $\DHS$. Then as in proposition \ref{prop2} we \emph{would} have a contradiction and we can conclude that $\xi^a$ must be tangential to every MTS of $\DHS$. 

A second alternative is to assume that the null energy condition holds everywhere on $\DHS$. Then, using the Raychaudhuri equation (\ref{Ray}), it follows that $\sigma_{ab}^{(k)} =0$ on every MTS $\Surs$ which, together with the fact that $\Theta_{(k)} =0$ implies that $k^a$ is a Killing field for the intrinsic metric on the 3-manifold $\t{\mathcal{O}}_{\mathfrak{H}}$. Next, note that the restriction of $\xi^a$ to  $\t{\mathcal{O}}_{\mathfrak{H}}$ has the form $\xi^a = \t{\xi}^a + f k^a$, where $\t{\xi}^a$ is tangential to $\mathcal{O}_\lambda$ and the function $f$ vanishes on the boundary %$\partial\,\t{\mathcal{O}}_{\mathfrak{H}}$ 
of $\t{\mathcal{O}}_{\mathfrak{H}}$. Therefore on the MTS $\Surs$ of $\DHS$ we began with, $\t\xi^a$ is a smooth continuation of $\xi^a$ that is a Killing field of the metric $\t{q}_{ab}$ on $\Surs$. This holds for every MTS $\Surs$ on $\DHS$. For notational simplicity, let us denote this smooth vector field on $\DHS$ also by $\t\xi^a$. Thus, even when the (spacelike) Killing field $\xi^a$ fails to be tangential everywhere to MTSs $\Surs$, its projection $\t\xi^a$ provides us a Killing field on every MTS $\Surs$.

Collecting these results and those of Propositions \ref{prop1} and \ref{prop2}, we arrive at our final 
conclusion: 
\begin{prop}
  \label{prop3}
Let $(\scrM,\,g_{ab})$ be a space-time that admits a Killing field in a region $\scrR$ that contains a DHS $\DHS$. Then $\xi^a$ must be spacelike on $\DHS$ and it cannot be everywhere transverse to \emph{any} MTS $\Surs$ of $\DHS$. If it is everywhere tangential to $\DHS$, then it must be tangential to each MTS $\Surs$ of $\DHS$. Finally,\\ 
(i) if $\E$ does not vanish on any open subset of an MTS in $\DHS$, then $\xi^a$ must be everywhere tangential to that MTS;\\
(ii) If $\E$ does vanish on an open subset $\t{\mathcal{O}}$ of an MTS $\Surs$ in $\DHS$, then $\xi^a$ could be transverse to $\Surs$ on $\t{\mathcal{O}}$. However if the null energy condition holds on $\Surs$, then the projection $\t\xi^a$ of $\xi^a$ in to $\Surs$ is also a Killing field on $\Surs$.\\
In both cases, the intrinsic metric on each MTS $\Surs$ in $\DHS$ is equipped with a Killing field. 

\end{prop}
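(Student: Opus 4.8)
The plan is to assemble Proposition \ref{prop3} from Propositions \ref{prop1}--\ref{prop2} and the case analysis of this section, the common engine being the tension between two facts about any 3-surface $\mathfrak{H}$ built by dragging an MTS $\Surs\subset\DHS$ along the flow of the Killing field $\xi^a$: on the one hand all the leaves $\Surs_\lambda$ of $\mathfrak{H}$ are isometric, hence share a common area; on the other hand, wherever $\mathfrak{H}$ (or a flow-invariant piece of it) is spacelike or timelike it is a DHS, so Lemma \ref{lemma1} forces its leaf-areas to be strictly monotonic in $\lambda$, while wherever it is null the Raychaudhuri equation (\ref{Ray}) together with $\Theta_{(k)}=0$ forces $\E\equiv 0$ there. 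First I would invoke Proposition \ref{prop2}: a causal Killing field cannot exist in a region containing a whole MTS of $\DHS$; since $\DHS$ is covered by such MTSs, $\xi^a$ must be spacelike at every point of $\DHS$.

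Next I would rule out $\xi^a$ being everywhere transverse to an MTS. Fix such an $\Surs$, form $\mathfrak{H}=\bigcup_\lambda\Surs_\lambda$, and let $W_p=\mathrm{span}(\xi^a,T_p\Surs)$; its causal character (that of the normal $V^a=f_1 k^a+f_2\uk^a$ to the leaves) is flow-invariant. Split $\Surs$ into the closed set $\mathcal{C}$ where $W_p$ is null and the open complement $\mathcal{O}$ where it is timelike or spacelike; their images $\mathcal{C}_{\mathfrak{H}},\mathcal{O}_{\mathfrak{H}}$ are preserved by the flows of both $\xi^a$ and $V^a$. On any connected component of $\mathcal{O}$ with definite signature the corresponding piece of $\mathfrak{H}$ is a DHS, so its leaf-areas are monotone in $\lambda$, contradicting the equal-area property — hence $\mathcal{O}=\emptyset$ and $W_p$ is null throughout $\Surs$. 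Then $V^a$ is a null normal to $\mathfrak{H}$, so it is proportional to $k^a$ or to $\uk^a$; the $\uk^a$ option is excluded because $\Theta_{(\uk)}$ is nowhere zero (again via (\ref{epsilondot}) and equal areas), so $V^a\propto k^a$, and after rescaling $k^a$ to be geodesic, (\ref{Ray}) gives $\E=0$ on all of $\mathfrak{H}$, contradicting condition 3 of Definition 2. This proves the transversality clause. If instead $\xi^a$ is everywhere tangential to $\DHS$, the uniqueness of the MTS foliation recalled in section \ref{s2.3} forces $\xi^a$ to be tangential to each $\Surs$.

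The remaining possibility is that $\xi^a$ is tangential on a closed subset $\t{\mathcal{C}}$ of every MTS and transverse on the open complement $\t{\mathcal{O}}$, which is again flow-invariant and decomposes into components of definite signature; running the previous argument on each component forces $\t{\mathcal{O}}_{\mathfrak{H}}$ to be null, whence $\E$ vanishes on every $\t{\mathcal{O}}$. Now the conclusion bifurcates. If $\E$ vanishes on no open subset of any MTS of $\DHS$, this already contradicts $\t{\mathcal{O}}\neq\emptyset$, so $\xi^a$ is tangential to that MTS — case (i). If $\E$ is allowed to vanish on an open set, one instead imposes the null energy condition; then (\ref{Ray}) gives $\sigma^{(k)}_{ab}=0$ on each leaf, which with $\Theta_{(k)}=0$ makes $k^a$ a Killing field of the intrinsic metric of $\t{\mathcal{O}}_{\mathfrak{H}}$, so writing $\xi^a=\t\xi^a+f k^a$ with $\t\xi^a$ tangent to the leaves and $f$ vanishing on $\partial\t{\mathcal{O}}_{\mathfrak{H}}$ exhibits $\t\xi^a$ as a smooth Killing field of $\t q_{ab}$ on each $\Surs$ — case (ii). Either way every MTS of $\DHS$ inherits a Killing field, which is the final assertion. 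I expect the genuinely delicate point to be the configuration in which $\mathcal{C}$ disconnects $\Surs$, so that the timelike and spacelike components of $\mathcal{O}$ could have canceling area variations; the resolution is simply to apply the monotonicity argument to a single connected component of definite signature rather than to $\mathcal{O}$ as a whole.
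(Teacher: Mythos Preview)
Your proposal is correct and mirrors the paper's argument essentially step for step: the same drag-along-$\xi^a$ construction of $\mathfrak{H}$, the same decomposition of $\Surs$ by the causal type of $W_p$, the same area-monotonicity-versus-isometry contradiction on timelike/spacelike components, the Raychaudhuri argument on null pieces, the appeal to foliation uniqueness for the tangential case, and the bifurcation into (i) and (ii) via the vanishing set of $\E$ and the null energy condition. You even anticipate and resolve the disconnected-$\mathcal{C}$ subtlety exactly as the paper does, by restricting to a single connected component of definite signature.
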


\textbf{Remarks:} 
\vskip0.1cm

1. Since each MTS $\Surs$ has 2-sphere topology, the induced metric $\t{q}_{ab}$ on $\Surs$ is conformally related to a unit, round, 2-sphere metric $\mathring{\t{q}}_{ab}$. There is a 3-parameter family of such round metrics that results from the 3-parameter family of `boosts' that maps unit, round 2-sphere metrics to one another. These considerations imply that a Killing field of the metrics $\t{q}_{ab}$ on MTSs $\Surs$ is a rotational Killing field of some round metric $\mathring{\t{q}}_{ab}$. (For details, see Appendix A.) Thus there are only two possibilities for symmetries induced on a DHS by space-time Killing fields $\xi^a$: Each MTS is either axisymmetric or spherically symmetric.
\vskip0.1cm

2. Thus, if the space-time metric admits a Killing field $\xi^a$ in a neighborhood of a DHS $\DHS$, then the metric on each MTS $\Surs$ is axisymmetric, whence one can define multipoles on $\DHS$ following the procedure summarized in section \ref{s2.2}.

\goodbreak
\section{Rotational Killing  fields and spin multipoles}
\label{s5}

In the last two sections we explored the constraints on admissible Killing fields $\xi^a$ in a neighborhood of a DHS $\DHS$. In this section we consider the complementary issue: restrictions on the geometry of $\DHS$ that are imposed by the presence a Killing field $\xi^a$ in its neighborhood. Results of sections \ref{s3} and \ref{s4} lead us to focus on DHSs in which each MTS is axisymmetric.

Recall from section \ref{s2.2} that the shape and the spin of any $\Surs$ in a $\DHS$ are determined, respectively, by the scalar curvature $\t\R$ and the (exterior derivative of the) rotational 1-form $\t\omega_a$, which in turn are determined by the shape and spin multipoles $\texttt{I}_{\ell,m}$ and 
$\texttt{S}_{\ell,m}$. The first and obvious restriction imposed by the presence of a rotational Killing field $\varphi^a$ is that all multipoles with $m\not=0$ vanish.
\footnote{Sometimes there is an additional discrete (reflection) symmetry $\zeta \to 1-\zeta$ (or $\theta \to \pi -\theta$), then all $\texttt{I}_\ell$ vanish for odd $\ell$, and all $\texttt{S}_\ell$ vanish for even $\ell$.}
A second restriction refers to the angular momentum dipole moment $\texttt{S}_{\ell=1}\equiv \texttt{S}_{1,0}$ of any MTS $\Surs$: It equals the Komar integral evaluated on it. Therefore it's change is dictated entirely by the flux of the matter angular momentum $T_{ab}\varphi^b$ across $\DHS$; it is conserved if there is no matter flux, even when there is flux of gravitational radiation across $\DHS$. 
In what follows, we will discuss a more subtle and rather surprising restriction on the spin multipoles.

In the axisymmetric case, the expression (\ref{dimensionless2}) of spin multipoles becomes:
\be \texttt{S}_{\ell} [\Surs] := \textstyle{\f{1}{2}}\, \oint_{\Surs} \big[ \,\t\epsilon^{ab} \t{D}_a \t\omega_b\,\big]\, Y_{\ell, 0} (\theta,\varphi) \, \rmd^2 \t{V}\, .\ee
Integrating by parts, expanding  $\t\epsilon^{ab}$, and using the expression (\ref{tomega}) of the rotational 1-form $\t\omega_a$ in terms of the extrinsic curvature $K_{ab}$ of $\DHS$ and the unit normal $\h{r}^a$ to the MTS $\Surs$ within $\DHS$, we have
\ba \label{main}\texttt{S}_\ell [\Surs] &=& -\f{1}{2R^2} \oint_{\Surs} (\varphi^a \t\omega_a)\,\partial_\zeta Y_{\ell}(\zeta)\, \rmd^2 \t{V} \nonumber\\
&=&  -\f{1}{2R^2} \oint_{\Surs} (\varphi^a\, K_{ab}\,\h{r}^b)\, \partial_\zeta Y_{\ell}(\zeta) \rmd^2 \t{V} \nonumber\\
&=&  -\f{1}{2R^2} \oint_{\Surs} (\varphi^a \nabla_b \h\tau_a)\, \h{r}^b\,  \partial_\zeta Y_{\ell}(\zeta) \rmd^2 \t{V} \nonumber\\
&=& \,\,\,\, {\f{1}{2R^2}} \oint_{\Surs} (\h\tau^a \nabla_b \varphi_a) \h{r}^b\,  \partial_\zeta Y_{\ell}(\zeta) \rmd^2 \t{V}\, . \ea
Finally, let us use the fact that the derivative of any Killing field $\varphi^a$ on a space-time $(\scrM,\, g_{ab})$ can be expressed in terms of its norm $\mathfrak{n} := g_{ab} \varphi^a \varphi^b$ and twist $\mathfrak{t}_a := \epsilon_{abcd}\, (\nabla^b \varphi^c)\, \varphi^d$:
\be \label{dphi}\nabla_a \varphi_b =  \f{1}{\mathfrak{n}}\, \big[ \varphi_{[b} \nabla_{a]} \mathfrak{n})\,  -\, \f{1}{2}\,\epsilon_{abcd}\, \mathfrak{t}^c \varphi^d \big] \ee
Using this expression  of $\nabla_b \varphi_a$ in the last step of (\ref{main}) and simplifying, we have:
\be \label{final}\texttt{S}_\ell [\Surs] = \f{1}{4} \oint_{\Surs} \f{1}{\mathfrak{n}}\, \big(\mathfrak{t}^a \partial_a Y_\ell(\zeta)\big)\, \rmd^2 \t{V}. \ee
Thus, spin moments $S_\ell [\Surs]$ are all completely determined by the twist $\mathfrak{t}^a$ of $\varphi^a$. In particular they \emph{all vanish} if $\mathfrak{t}^a$ vanishes on $\Surs$, i.e., if $\varphi^a$ is hypersurface orthogonal there.

This is rather surprising because in a general axisymmetric space-time, we have a constraint only on the angular momentum dipole moment $\texttt{S}_{\ell=1}[\Surs]\,\, (\equiv\, {J}_\varphi [\Surs])$ : as noted above, it equals the Komar integral and is conserved if vacuum equations hold in a neighborhood of $\DHS$. Not only are higher spin-multipoles non-zero, but they are generically \emph{time-dependent} even when vacuum equations hold! (See Eqs. (3.13) and (3.14) of \cite{Ashtekar:2013qta} for expressions of the flux.) If, on the other hand, $\varphi^a$ is hypersurface orthogonal at $\DHS$, then not only the fluxes, but the spin multipoles themselves vanish. An example in which this situation occurs is provided by the head-on collision of non-spinning black holes investigated in detail in \cite{Pook-Kolb:2018igu,PhysRevLett.123.171102}. At first it was rather puzzling as to why this high precision numerical simulation yielded a DHS with vanishing spin moments, even though it is not spherically symmetric. Of course, the shape multipoles $\texttt{I}_\ell$ of such a DHS are quite different from those of a spherically symmetric DHS: while all $\texttt{I}_\ell$ vanish except for $\ell=0$ on spherical DHSs, not only are they non-zero on axisymmetric DHSs with hypersurface orthogonal $\varphi^a$, but they have interesting and highly non-trivial dynamics.

\section{Discussion}
\label{s6}
 
In this note we have summarized the interplay between DHSs and space-time isometries in their neighborhood. DHSs can be thought of as `world tubes' of MTSs. In numerical simulations they are often obtained by first introducing a space-time foliation, and then locating MTSs on each (partial) Cauchy slice. 
\footnote{Typically, one locates the apparent horizon on each slice but this strategy fails to find DHSs corresponding to the progenitors after a merger. One needs advanced horizon finders that can locate MTSs also inside apparent horizons \cite{pook_kolb_daniel_2019_2591105}.}
However, geometrically, a DHS is simply a 3-dimensional submanifold of space-time in its own right without reference to any foliation. Therefore in this note we considered Killing fields $\xi^a$ in neighborhoods $\mathcal{R}$ of DHSs $\DHS$, without reference to any foliation of space-time. We found that the very presence of $\DHS$ leads to strong restrictions on permissible $\xi^a$ in $\mathcal{R}$.  There are several results in the literature on this interplay. However, they involve additional assumptions, e.g., the requirement that the Killing field be strictly stationary \cite{Mars:2003ud}; or that the DHS be spacelike, with strictly negative $\Theta_{\uk}$ and nowhere vanishing $\E = \sigma_{ab}^{(k)} \sigma^{ab}_{({k)}}  + R_{ab} k^a k^b$ \cite{Ashtekar:2005ez}; or that the MTS under consideration is strictly stable in certain directions \cite{Andersson:2007fh}; or the reasoning is tied to a space-time foliation \cite{Carrasco:2009sa,Booth_2024}. In this note we focused on DHSs without reference to space-time foliations, and generalized these results by dropping additional assumptions, and including possibilities for the behavior of $\xi^a$ at $\DHS$ that were not considered before. 

Our main result is contained in Proposition \ref{prop3} of section \ref{s4}. First, it rules out the presence Killing vector $\xi^a$ that is Causal on $\DHS$. On the one hand this is what one would intuitively expect because the intrinsic geometry on any $\DHS$ is dynamical. However, the result is not obvious because, e.g., spacelike world tubes of MTSs with growing areas can exist in presence of a null Killing field \cite{Senovilla:2003tw}. It is the specific conditions in Definition 2 of DHSs that forbid the existence of such Killing fields in a neighborhood of $\DHS$. Thus, the absence of causal Killing fields provides some confidence that the conditions in Definition 3 are well tuned to capture the intuitive connotation one has in mind when one speaks of DHSs. 

Second, Proposition \ref{prop3} also puts strong restrictions on possible spacelike Killing fields $\xi^a$ in a neighborhood $\mathcal{R}$ of $\DHS$:\\ (i) $\xi^a$ cannot be transverse to \emph{any} entire MTS $\Surs$ in $\DHS$;\,\\ (ii) if it is tangential to $\DHS$, it must be tangential to every $\Surs$ in $\DHS$;\\ (iii) if it is tangential on a closed subset $\mathcal{C}$ of a MTS $\Surs$ and transverse on its complement $\mathcal{O}$, then the 3-flat $W_p$ spanned by $\xi^a$ and the tangent space $T_\Surs$ must be null everywhere on $\mathcal{O}$, and $\E$ must vanish on $\mathcal{O}$. Clearly,  this possibility cannot be realized if $\E$ does not vanish on open sets of any $\Surs$, as is often assumed;\, and,\\ (iv) Even when possibility (iii) is realized, the projection $\t{\xi}^a$ of $\xi^a$ on $\Surs$ is a rotational Killing field on $\Surs$ if the null energy condition holds. This is in particular the case in vacuum space-times considered in black hole coalescence and scattering.\\ Thus, if there is a Killing field $\xi^a$ in a neighborhood $\mathcal{R}$ of $\DHS$, then generically each MTS $\Surs$ of $\DHS$ is axisymmetric. 

\emph{Hence, generically MTSs $\Surs$ of a DHS in space-times admitting Killing fields are either axisymmetric or spherically symmetric.} No other symmetries are permissible. By contrast, non-expanding horizons (NEHs) also allow space-time isometries that are null. Indeed the familiar Kerr event horizons are particular cases of NEHs. These symmetries are not permissible on DHSs because their geometries are dynamical. 

These results have an interesting implication for DHSs that arise in numerical simulations where DHSs  are generated by time evolution of MTSs. Suppose the space-time under consideration is axisymmetric and does admit a DHS $\DHS$ generated, e.g., by finding a MTS on each leaf of an axisymmetric foliation, i.e., one in which the rotational Killing field $\varphi^a$ is tangential to each leaf. Now consider a different foliation such that $\varphi^a$ is \emph{everywhere} transverse to each leaf. If such a foliation were to generate a DHS $\DHS^\prime$ then on each MTS $\Surs^\prime$ of  $\DHS^\prime$, the Killing field would be everywhere transversal. But this is not possible by result (ii) above. Thus, to generate a DHS, one must use a foliation to which $\varphi^a$ is at least partially tangential.  
{\footnote{Furthermore, the result by Booth et al \cite{Booth_2024} mentioned in section \ref{s2.3} implies that even when  $\varphi^a$ is tangential to any one leaf, if it fails to be tangential to the MTS on that leaf, then that MTS would not be stable. Therefore, in this case we are not guaranteed that the time evolution of that MTS would yield a DHS (although it could, since MTSs on a DHS need not be stable).} }
If  a smoothly connected world-tube of MTSs were found using a foliation to which $\xi^a$ is nowhere tangential, it would not constitute a DHS; at least one of the conditions in Definition 2 must is violated. In particular, the signature of the metric and/or the sign of $\Theta_{(\uk)}$ may not be constant on the world-tube of MTSs. 

Interestingly, asymptotic flatness at null infinity, $\scri$, also imposes strong restrictions on isometries that the space-time can admit near $\scri$  \cite{doi:10.1063/1.524467}. For the appropriate comparison with DHSs we need to allow the possibility that $\scri$ is incomplete, and restrict ourselves to situations in which there is  gravitational radiation and/or matter flux across $\scri$.  In that case, axisymmetry and spherical symmetry are of course permissible. But there is also a possibility of a 1-dimensional isometry group with a boost symmetry, and a 2-dimensional isometry group with a boost-rotational symmetry \cite{BicakSchmidt1989}. Indeed, this second possibility is realized in the c-metric \cite{Ashtekar:1981} which admits a $\scri$ with $S^2\times R$ topology with non-vanishing Bondi news. Since it represents two accelerating black holes and there is gravitational radiation, why is this symmetry not allowed in the DHS framework? This is because in this case the solutions have `nodal singularities' on 2-spheres whence they do not admit smooth DHSs. 
\footnote{At $\scrip$ these nodal singularities appear only on two `half, generators, to the future or past of the so-called `bullet holes' at which the black holes pierce $\scrip$. Therefore we do have a $S^2\times R$ portion of $\scrip$ without nodal singularities.}
As specified at the end of section \ref{s1}, our results assume smooth geometries which preclude such nodal singularities (also, see  Appendix A).

Finally let us summarize results of section \ref{s5} on constraints on the spin structure of DHSs that are imposed by the presence of a Killing vector field. First, because in presence of an isometry the intrinsic geometry of every MTS $\Surs$ of $\DHS$ is axisymmetric, we can use this axial Killing field $\t\varphi^a$ to define multipoles following the procedure outlined in section \ref{s2.2}. The presence of $\t\varphi^a$ constrains the multipoles. The first set of constraints is rather obvious:\\ (i) The shape multipoles $\texttt{I}_{\ell, m}$ and the spin multiples $\texttt{S}_{\ell, m}$ all vanish for $m\not=0$; and, \\(ii) The spin dipole moment agrees with the Komar integral, whence its flux is governed entirely by matter field.\\ But surprisingly, \emph{all} spin-multipoles vanish identically if $\t\varphi^a$ is hypersurface orthogonal! \emph{Thus, the spin structure of such a DHS is indistinguishable from that of a spherically symmetric DHS!} It would not have been surprising at all if just the spin dipole vanished in this case because it equals the Komar integral and it is well known that it vanishes if the rotational Killing field is hypersurface orthogonal. That this must be the case also for \emph{all} higher spin multipoles was not foreseen; it would have been less surprising if, e.g., \emph{only the fluxes} of spin multipoles vanished, i.e., the moments were conserved. In the process of establishing this result, we also found an explicit and expression (\ref{final}) for spin multipoles $\texttt{S}_{\ell}$ in terms of the norm and the twist of the Killing field for general axisymmetric DHSs. This expression may be useful in numerical simulations of axisymmetric collapse and in phases of binary coalescence where DHSs are found to be axisymmetric within numerical accuracy.
\goodbreak

\section*{Acknowledgments} 
This work was supported in part by the Atherton and Eberly funds of Penn State and the Distinguished Visiting Research Chair Program of the Perimeter Institute.

\begin{appendix}
\section{Killing fields that a metric can admit on $S^2$.}
\label{a1}

Consider a smooth metric $\t{q}_{ab}$ on a 2-manifold that is topologically $S^2$. Then, as is well  known, $\t{q}_{ab}$ is conformally related to a unit, round, 2-sphere metric $\qot_{ab}$:\,\, $\t{q}_{ab} = \t{\psi}^2 \qot_{ab}$. We can regard the round 2-sphere as residing in the Euclidean 3-space $(R^3, \qo_{ab}) $, given by ${\rm x}^a {\rm x}_a = 1$, where ${\rm x}^a$ is the position vector in $R^3$ and, as usual, the indices are raised and lowered using the Euclidean 3-metric $\qo_{ab}$.
\footnote{In this Appendix, we work in $R^3$. In particular, abstract indices $a, b, \ldots$ refer to the tangent space of $R^3$. Tensor fields what are intrinsic to (i.e. tangential to) $S^2$ will carry a tilde. Thus, $\qot_{ab}$ is the metric induced on $S^2$ by the Euclidean metric $\qo_{ab}$ on $R^3$.}
We will denote the \emph{unit} radial vector field in $R^3$ by $\h{r}^a$. In this Appendix, (the abstract) latin indices $a,b,\ldots$ refer to the tangent space of $R^3$ and fields carrying a tilde are intrinsic to $S^2$, i.e., their contraction with $\h{r}^a$ vanish.

The round metric $\qot_{ab}$ is not unique: there is precisely a 3-parameter family of such unit round metrics that are themselves conformally related to one another:
\be \label{trans} \qot_{ab}^{\,\prime}\,  \= \,\psio^2\, \qot_{ab},\quad {\rm where} \quad \psio^{-1}\, \= \,{\alpha_\circ \,+\, \alpha_a \hr^a}, \quad{\rm with} \quad \alpha_\circ - \alpha_a \alpha^a = 1\ee
where $\=$ denotes restriction to the unit $S^2$ in $R^3$, $\alpha_\circ$ is a constant and $\alpha^a$ a constant vector field on $R^3$. In spherical polar coordinates on $R^3$ we have:
\be \psio^{-1}\, \=\, {\alpha_\circ + \alpha_1\, \sin\theta \cos\phi + \alpha_2\, \sin\theta \sin\phi+ \alpha_3\, \cos\theta}\,.\ee
In this Appendix we show that if $\t{q}_{ab}$ admits a Killing field $\t\xi^a$, then there exists a unit round metric $\qot_{ab}^{\,\prime}$  for which $\t\xi^a$ is a Killing field; thus $\t\xi^a$ is a `rotation'. This is the assertion of Remark 1 at the end of section \ref{s4}.

Let us begin by noting a few facts about symmetries of round 2-sphere metrics $\qot_{ab}$. Since $(S^2,\, \qot_{ab})$ is a manifold of constant curvature, it admits six conformal Killing fields $\t\xi^a$, the maximum number permissible on a 2-manifold. They can be expressed as:
\be \label{xi} \t\xi^a \,\=\, \t{B}^a + \t{R}^{a}\, ,\qquad {\rm with} \qquad \t{B}^a\, \=\,  \qot^{ab}\Ko_b\quad {\rm and}\quad 
\t{R}^a\, \=\, \eot^{ab} \Lo_b \ee
where  $\Ko^a$ and $\Lo^a$ are \emph{constant} vector fields on $R^3$, while $\qot_{ab}$ and $\eot_{ab}$ are the intrinsic metric and the area 2-form on $S^2$, induced by the Euclidean metric $\qo_{ab}$:
\be \qot_{ab}\, \=\, \qo_{ab} -\hr_a \hr_b\qquad {\rm and}\qquad \eot_{ab}\, \=\,\, \eo_{mab}\hr^m\,. \ee
It is easy to check that $\t{R}^a$ is a Killing vector field of $\qot_{ab}$ on $S^2$,\, i.e. ${\Dot}_{(a}  \t{R}_{b)}\, \=\,0$,\, and $\t{B}^a$ is a conformal Killing field, i.e.,\, ${\Dot}_{(a} \t{B}_{b)\,} \= -\Ko^a\hr_a$, where  ${\Dot}$ is the intrinsic derivative operator on $S^2$ compatible with $\qot_{ab}$. Interestingly, $\t{B}_b$ is a gradient,\,  $\t{B}_a \= \Dot_a (K_c\hr^c)$,\, whence ${\Dot}_{[a} \t{B}_{b]}\,\, \=\, 0$. Thus, the derivative of $\t{R}_a$ is purely anti-symmetric, while that of $\t{B}_a$ is purely symmetric. In this sense, while $\t{R}^a$ is a Killing field, $\t{B}^a$ can be regarded as a `pure' conformal Killing field. To summarize, there is a neat division of the conformal Killing fields $\t\xi^a$  on $(S^2, \qot_{ab})$ into Killing fields $\t{R}^a$ (one for each constant vector fields $L^a$ in $R^3$) and `pure' conformal Killing fields $\t{B}^a$ (one for each constant vector field $K^a$ in $R^3$).  (This division persists on all n-manifolds of constant curvature.) On $S^2$, the Killing vector fields $\t{R}^a$ of $\qot_{ab}$ are rotations, with closed orbits and precisely two zeros. The `pure' conformal Killing fields have the interpretation of `boosts' --that descends from the action of the Lorentz group on the light cone of a point in Minkowski space. (On any 2-sphere cross-section of $\scri$ that is left invariant by the action of a Lorentz subgroup of the Bondi-Metzner Sachs group, rotations and boosts assume the form of our $\t{R}^a$ and $\t{B}^a$ in any Bondi conformal frame.)

Let us return to the general metric $\t{q}_{ab}$ on $S^2$. Since it is conformally related to $\qot_{ab}$, any Killing field $\t\xi^a$ of $\t{q}_{ab}$ is a conformal Killing field of $\qot_{ab}$. Therefore it is a linear combination of a boost $\t{B}^a$ and a rotation $\t{R}^a$ of $\qot_{ab}$ as in (\ref{xi}). Therefore, given a Killing field $\t\xi^a$ of $\t{q}_{ab}$, we have:
\be 0\,=\, \Lie_{\t\xi}\, \t{q}_{ab}\, \=\, \big(\Lie_{\t\xi} \ln \t{\psi} - 2 (\Ko\cdot \hr)\big)\, \t{q}_{ab}\,\ee
whence, 
\be \label{Lie} \Lie_{\t\xi}\, \ln \t{\psi}\, \= 2\, (\Ko\cdot \hr)\, \quad {\hbox{\rm everywhere on $S^2$}}.\ee
Since the conformal factor $\t{\psi}$ is smooth and non-zero everywhere on $\Surs$, it follows that at points on $S^2$ where the Killing vector $\t\xi^a$ vanishes, the right side must also vanish, i.e. that the
zeros of $\xi^a$ must lie on the the great circle on which $\Ko\cdot \hr =0$. Using the form (\ref{xi}) of $\t\xi^a$ one can show that this is possible if and only if  $\Ko\cdot \Lo =0$ and $\Ko\cdot \Ko < \Lo\cdot \Lo$. 

Let us then restrict ourselves to $\t\xi^a$ for which $\Ko^a$ and $\Lo^a$ satisfy this condition.
\footnote{For example, if we choose $K^a$ to point along $z$ axis then $\t{B}^a$ is a $z-t$ boost. In this case, $\Lo^a$ must lie in the $x-y$ plane so that $\t{R}^a$ is a rotation along an axis in that plane. These interpretations refer to the unit round metric  $\qot_{ab}$ used in the expression (\ref{xi}) of $\t\xi^a$.}
Then, the Killing vector $\t\xi^a$ of the given metric $\t{q}_{ab}$ is a linear combination of a rotation and a boost w.r.t. the reference round metric $\qot_{ab}$ used in the expression (\ref{xi}). But we know that there is a 3 parameter freedom in the choice of a unit round metric on $S^3$. Can we choose another round metric $\qot^{\,\prime}_{ab}$ of (\ref{trans}) such that $\xi^a$ is a pure rotation with respect to $\qot^\prime_{ab}$? Thanks to the conditions satisfied by $K^a$ and $L^a$, the answer is in the affirmative!  To obtain $\qot^{\,\prime}_{ab}$, we have to choose the constant $\alpha_\circ$ and the constant vector field $\alpha^a$ in (\ref{trans}) such that:
\be \mathring\epsilon^{abc}\, \alpha_a \,\Lo_b = \alpha_\circ \Ko^c\, .  \ee
This is possible precisely because $\Ko\cdot \Lo =0$ and $\Ko\cdot \Ko < \Lo\cdot \Lo$, and there is a one parameter freedom in the choice of the required $\alpha^a$. Thus, given a Killing field $\t\xi^a$ of $\t{q}_{ab}$, there exists a 1-parameter family of unit, round metric $\qot^{\,\prime}_{ab} = \psio^2\,\t{\psi}^{-2} \t{q}_{ab}$ such that $\xi^a$ is a pure rotation w.r.t. $\qot^{\,\prime}_{ab}$. 

\vskip0.1cm
\textbf{Remark:} 

As discussed in section \ref{s6}, the c-metric does admit a boost Killing field. On the other hand, conditions on $\Ko$ and $\Lo$ imply that a Killing field $\xi^a$ of the physical metric $\t{q}_{ab}$ cannot be a pure boost. It is instructive to see what goes wrong if we set $\t{xi}^a = \t{B}^a$. Then, an explicit calculation shows that the solution $\t\psi$ to  Eq. (\ref{Lie}) must diverge at the `poles' where which $\t{B}^a$ vanishes, violating the assumption that $\t{q}_{ab}$ is smooth. (In the c-metric these `poles' are the locations of nodal singularities of $\t{q}_{ab}$.) Thus, smoothness of the metric of DHSs prevents the boost to be a Killing field of $\t{q}_{ab}$. In particular, then, the 2-d (Abelian) boost-rotation subgroup of the conformal group of $\qot_{ab}$ cannot be part of the isometry group in a neighborhood $\mathcal{R}$ of a DHS. We only have two possible isometric groups in $\mathcal{R}$: either a single rotational Killing field (axisymmetry), or three (spherical symmetry).

\end{appendix}

\bibliography{symmetry}{}

%apsrev4-2.bst 2019-01-14 (MD) hand-edited version of apsrev4-1.bst
%Control: key (0)
%Control: author (8) initials jnrlst
%Control: editor formatted (1) identically to author
%Control: production of article title (0) allowed
%Control: page (0) single
%Control: year (1) truncated
%Control: production of eprint (0) enabled
\begin{thebibliography}{50}%
\makeatletter
\providecommand \@ifxundefined [1]{%
 \@ifx{#1\undefined}
}%
\providecommand \@ifnum [1]{%
 \ifnum #1\expandafter \@firstoftwo
 \else \expandafter \@secondoftwo
 \fi
}%
\providecommand \@ifx [1]{%
 \ifx #1\expandafter \@firstoftwo
 \else \expandafter \@secondoftwo
 \fi
}%
\providecommand \natexlab [1]{#1}%
\providecommand \enquote  [1]{``#1''}%
\providecommand \bibnamefont  [1]{#1}%
\providecommand \bibfnamefont [1]{#1}%
\providecommand \citenamefont [1]{#1}%
\providecommand \href@noop [0]{\@secondoftwo}%
\providecommand \href [0]{\begingroup \@sanitize@url \@href}%
\providecommand \@href[1]{\@@startlink{#1}\@@href}%
\providecommand \@@href[1]{\endgroup#1\@@endlink}%
\providecommand \@sanitize@url [0]{\catcode `\\12\catcode `\$12\catcode
  `\&12\catcode `\#12\catcode `\^12\catcode `\_12\catcode `\%12\relax}%
\providecommand \@@startlink[1]{}%
\providecommand \@@endlink[0]{}%
\providecommand \url  [0]{\begingroup\@sanitize@url \@url }%
\providecommand \@url [1]{\endgroup\@href {#1}{\urlprefix }}%
\providecommand \urlprefix  [0]{URL }%
\providecommand \Eprint [0]{\href }%
\providecommand \doibase [0]{https://doi.org/}%
\providecommand \selectlanguage [0]{\@gobble}%
\providecommand \bibinfo  [0]{\@secondoftwo}%
\providecommand \bibfield  [0]{\@secondoftwo}%
\providecommand \translation [1]{[#1]}%
\providecommand \BibitemOpen [0]{}%
\providecommand \bibitemStop [0]{}%
\providecommand \bibitemNoStop [0]{.\EOS\space}%
\providecommand \EOS [0]{\spacefactor3000\relax}%
\providecommand \BibitemShut  [1]{\csname bibitem#1\endcsname}%
\let\auto@bib@innerbib\@empty
%</preamble>
\bibitem [{\citenamefont {Ashtekar}\ and\ \citenamefont
  {Krishnan}(2004)}]{Ashtekar:2004cn}%
  \BibitemOpen
  \bibfield  {author} {\bibinfo {author} {\bibfnamefont {A.}~\bibnamefont
  {Ashtekar}}\ and\ \bibinfo {author} {\bibfnamefont {B.}~\bibnamefont
  {Krishnan}},\ }\bibfield  {title} {\bibinfo {title} {{Isolated and dynamical
  horizons and their applications}},\ }\href@noop {} {\bibfield  {journal}
  {\bibinfo  {journal} {Living Rev. Rel.}\ }\textbf {\bibinfo {volume} {7}},\
  \bibinfo {pages} {10} (\bibinfo {year} {2004})},\ \Eprint
  {https://arxiv.org/abs/gr-qc/0407042} {arXiv:gr-qc/0407042} \BibitemShut
  {NoStop}%
%%CITATION = GR-QC/0407042;%%
\bibitem [{\citenamefont {Visser}(2014)}]{Visser_2014}%
  \BibitemOpen
  \bibfield  {author} {\bibinfo {author} {\bibfnamefont {M.}~\bibnamefont
  {Visser}},\ }\bibfield  {title} {\bibinfo {title} {Physical observability of
  horizons},\ }\bibfield  {journal} {\bibinfo  {journal} {Physical Review D}\
  }\textbf {\bibinfo {volume} {90}},\ \href
  {https://doi.org/10.1103/physrevd.90.127502} {10.1103/physrevd.90.127502}
  (\bibinfo {year} {2014})\BibitemShut {NoStop}%
\bibitem [{\citenamefont {Ashtekar}\ and\ \citenamefont
  {Krishnan}(2003)}]{Ashtekar:2003hk}%
  \BibitemOpen
  \bibfield  {author} {\bibinfo {author} {\bibfnamefont {A.}~\bibnamefont
  {Ashtekar}}\ and\ \bibinfo {author} {\bibfnamefont {B.}~\bibnamefont
  {Krishnan}},\ }\bibfield  {title} {\bibinfo {title} {{Dynamical horizons and
  their properties}},\ }\href {https://doi.org/10.1103/PhysRevD.68.104030}
  {\bibfield  {journal} {\bibinfo  {journal} {Phys. Rev.}\ }\textbf {\bibinfo
  {volume} {D68}},\ \bibinfo {pages} {104030} (\bibinfo {year} {2003})},\
  \Eprint {https://arxiv.org/abs/gr-qc/0308033} {arXiv:gr-qc/0308033}
  \BibitemShut {NoStop}%
%%CITATION = GR-QC/0308033;%%
\bibitem [{\citenamefont {Kehle}\ and\ \citenamefont
  {Unger}(2024)}]{kehle2024extremalblackholeformation}%
  \BibitemOpen
  \bibfield  {author} {\bibinfo {author} {\bibfnamefont {C.}~\bibnamefont
  {Kehle}}\ and\ \bibinfo {author} {\bibfnamefont {R.}~\bibnamefont {Unger}},\
  }\href {https://arxiv.org/abs/2402.10190} {\bibinfo {title} {Extremal black
  hole formation as a critical phenomenon}} (\bibinfo {year} {2024}),\ \Eprint
  {https://arxiv.org/abs/2402.10190} {arXiv:2402.10190 [gr-qc]} \BibitemShut
  {NoStop}%
\bibitem [{\citenamefont {Geroch}\ and\ \citenamefont
  {Horowitz}(1978)}]{Geroch:1978ub}%
  \BibitemOpen
  \bibfield  {author} {\bibinfo {author} {\bibfnamefont {R.~P.}\ \bibnamefont
  {Geroch}}\ and\ \bibinfo {author} {\bibfnamefont {G.}~\bibnamefont
  {Horowitz}},\ }\bibfield  {title} {\bibinfo {title} {{Asymptotically simple
  does not imply asymptotically Minkowskian}},\ }\href
  {https://doi.org/10.1103/PhysRevLett.40.203} {\bibfield  {journal} {\bibinfo
  {journal} {Phys.Rev.Lett.}\ }\textbf {\bibinfo {volume} {40}},\ \bibinfo
  {pages} {203} (\bibinfo {year} {1978})}\BibitemShut {NoStop}%
%%CITATION = PRLTA,40,203;%%
\bibitem [{\citenamefont {Andersson}\ \emph {et~al.}(2008)\citenamefont
  {Andersson}, \citenamefont {Mars},\ and\ \citenamefont
  {Simon}}]{Andersson:2007fh}%
  \BibitemOpen
  \bibfield  {author} {\bibinfo {author} {\bibfnamefont {L.}~\bibnamefont
  {Andersson}}, \bibinfo {author} {\bibfnamefont {M.}~\bibnamefont {Mars}},\
  and\ \bibinfo {author} {\bibfnamefont {W.}~\bibnamefont {Simon}},\ }\bibfield
   {title} {\bibinfo {title} {{Stability of marginally outer trapped surfaces
  and existence of marginally outer trapped tubes}},\ }\href@noop {} {\bibfield
   {journal} {\bibinfo  {journal} {Adv.Theor.Math.Phys.}\ }\textbf {\bibinfo
  {volume} {12}} (\bibinfo {year} {2008})},\ \Eprint
  {https://arxiv.org/abs/0704.2889} {arXiv:0704.2889 [gr-qc]} \BibitemShut
  {NoStop}%
%%CITATION = ARXIV:0704.2889;%%
\bibitem [{\citenamefont {Ashtekar}(2020)}]{Ashtekar_2020}%
  \BibitemOpen
  \bibfield  {author} {\bibinfo {author} {\bibfnamefont {A.}~\bibnamefont
  {Ashtekar}},\ }\bibfield  {title} {\bibinfo {title} {Black hole evaporation:
  A perspective from loop quantum gravity},\ }\href
  {https://doi.org/10.3390/universe6020021} {\bibfield  {journal} {\bibinfo
  {journal} {Universe}\ }\textbf {\bibinfo {volume} {6}},\ \bibinfo {pages}
  {21} (\bibinfo {year} {2020})}\BibitemShut {NoStop}%
\bibitem [{\citenamefont {Afshordi}\ and\ \citenamefont
  {et~al}(2024)}]{afshordi2024blackholesinside2024}%
  \BibitemOpen
  \bibfield  {author} {\bibinfo {author} {\bibfnamefont {N.}~\bibnamefont
  {Afshordi}}\ and\ \bibinfo {author} {\bibnamefont {et~al}},\ }\href
  {https://arxiv.org/abs/2410.14414} {\bibinfo {title} {Black holes inside and
  out 2024: visions for the future of black hole physics}} (\bibinfo {year}
  {2024}),\ \Eprint {https://arxiv.org/abs/2410.14414} {arXiv:2410.14414
  [gr-qc]} \BibitemShut {NoStop}%
\bibitem [{\citenamefont {Ashtekar}\ and\ \citenamefont
  {Krishnan}(2002)}]{Ashtekar:2002ag}%
  \BibitemOpen
  \bibfield  {author} {\bibinfo {author} {\bibfnamefont {A.}~\bibnamefont
  {Ashtekar}}\ and\ \bibinfo {author} {\bibfnamefont {B.}~\bibnamefont
  {Krishnan}},\ }\bibfield  {title} {\bibinfo {title} {{Dynamical horizons:
  Energy, angular momentum, fluxes and balance laws}},\ }\href
  {https://doi.org/10.1103/PhysRevLett.89.261101} {\bibfield  {journal}
  {\bibinfo  {journal} {Phys. Rev. Lett.}\ }\textbf {\bibinfo {volume} {89}},\
  \bibinfo {pages} {261101} (\bibinfo {year} {2002})},\ \Eprint
  {https://arxiv.org/abs/gr-qc/0207080} {arXiv:gr-qc/0207080} \BibitemShut
  {NoStop}%
%%CITATION = GR-QC/0207080;%%
\bibitem [{\citenamefont {Booth}(2005)}]{Booth:2005qc}%
  \BibitemOpen
  \bibfield  {author} {\bibinfo {author} {\bibfnamefont {I.}~\bibnamefont
  {Booth}},\ }\bibfield  {title} {\bibinfo {title} {{Black hole boundaries}},\
  }\href {https://doi.org/10.1139/p05-063} {\bibfield  {journal} {\bibinfo
  {journal} {Can. J. Phys.}\ }\textbf {\bibinfo {volume} {83}},\ \bibinfo
  {pages} {1073} (\bibinfo {year} {2005})},\ \Eprint
  {https://arxiv.org/abs/gr-qc/0508107} {arXiv:gr-qc/0508107} \BibitemShut
  {NoStop}%
%%CITATION = GR-QC/0508107;%%
\bibitem [{\citenamefont {Booth}\ and\ \citenamefont
  {Fairhurst}(2004)}]{Booth:2003ji}%
  \BibitemOpen
  \bibfield  {author} {\bibinfo {author} {\bibfnamefont {I.}~\bibnamefont
  {Booth}}\ and\ \bibinfo {author} {\bibfnamefont {S.}~\bibnamefont
  {Fairhurst}},\ }\bibfield  {title} {\bibinfo {title} {{The first law for
  slowly evolving horizons}},\ }\href
  {https://doi.org/10.1103/PhysRevLett.92.011102} {\bibfield  {journal}
  {\bibinfo  {journal} {Phys. Rev. Lett.}\ }\textbf {\bibinfo {volume} {92}},\
  \bibinfo {pages} {011102} (\bibinfo {year} {2004})},\ \Eprint
  {https://arxiv.org/abs/gr-qc/0307087} {arXiv:gr-qc/0307087} \BibitemShut
  {NoStop}%
%%CITATION = GR-QC/0307087;%%
\bibitem [{\citenamefont {Senovilla}(2003)}]{Senovilla:2003tw}%
  \BibitemOpen
  \bibfield  {author} {\bibinfo {author} {\bibfnamefont {J.~M.~M.}\
  \bibnamefont {Senovilla}},\ }\bibfield  {title} {\bibinfo {title} {{On the
  existence of horizons in spacetimes with vanishing curvature invariants}},\
  }\href@noop {} {\bibfield  {journal} {\bibinfo  {journal} {JHEP}\ }\textbf
  {\bibinfo {volume} {11}},\ \bibinfo {pages} {046}},\ \Eprint
  {https://arxiv.org/abs/hep-th/0311172} {arXiv:hep-th/0311172} \BibitemShut
  {NoStop}%
%%CITATION = HEP-TH/0311172;%%
\bibitem [{\citenamefont {Andersson}\ \emph {et~al.}(2005)\citenamefont
  {Andersson}, \citenamefont {Mars},\ and\ \citenamefont
  {Simon}}]{Andersson:2005gq}%
  \BibitemOpen
  \bibfield  {author} {\bibinfo {author} {\bibfnamefont {L.}~\bibnamefont
  {Andersson}}, \bibinfo {author} {\bibfnamefont {M.}~\bibnamefont {Mars}},\
  and\ \bibinfo {author} {\bibfnamefont {W.}~\bibnamefont {Simon}},\ }\bibfield
   {title} {\bibinfo {title} {{Local existence of dynamical and trapping
  horizons}},\ }\href {https://doi.org/10.1103/PhysRevLett.95.111102}
  {\bibfield  {journal} {\bibinfo  {journal} {Phys.Rev.Lett.}\ }\textbf
  {\bibinfo {volume} {95}},\ \bibinfo {pages} {111102} (\bibinfo {year}
  {2005})},\ \Eprint {https://arxiv.org/abs/gr-qc/0506013} {arXiv:gr-qc/0506013
  [gr-qc]} \BibitemShut {NoStop}%
%%CITATION = GR-QC/0506013;%%
\bibitem [{\citenamefont {Ashtekar}\ and\ \citenamefont
  {Galloway}(2005)}]{Ashtekar:2005ez}%
  \BibitemOpen
  \bibfield  {author} {\bibinfo {author} {\bibfnamefont {A.}~\bibnamefont
  {Ashtekar}}\ and\ \bibinfo {author} {\bibfnamefont {G.~J.}\ \bibnamefont
  {Galloway}},\ }\bibfield  {title} {\bibinfo {title} {{Some uniqueness results
  for dynamical horizons}},\ }\href@noop {} {\bibfield  {journal} {\bibinfo
  {journal} {Adv. Theor. Math. Phys.}\ }\textbf {\bibinfo {volume} {9}},\
  \bibinfo {pages} {1} (\bibinfo {year} {2005})},\ \Eprint
  {https://arxiv.org/abs/gr-qc/0503109} {arXiv:gr-qc/0503109} \BibitemShut
  {NoStop}%
GR-QC/0503109
\bibitem [{\citenamefont {Andersson}\ \emph {et~al.}(2009)\citenamefont
  {Andersson}, \citenamefont {Mars}, \citenamefont {Metzger},\ and\
  \citenamefont {Simon}}]{Andersson:2008up}%
  \BibitemOpen
  \bibfield  {author} {\bibinfo {author} {\bibfnamefont {L.}~\bibnamefont
  {Andersson}}, \bibinfo {author} {\bibfnamefont {M.}~\bibnamefont {Mars}},
  \bibinfo {author} {\bibfnamefont {J.}~\bibnamefont {Metzger}},\ and\ \bibinfo
  {author} {\bibfnamefont {W.}~\bibnamefont {Simon}},\ }\bibfield  {title}
  {\bibinfo {title} {{The Time evolution of marginally trapped surfaces}},\
  }\href {https://doi.org/10.1088/0264-9381/26/8/085018} {\bibfield  {journal}
  {\bibinfo  {journal} {Class.Quant.Grav.}\ }\textbf {\bibinfo {volume} {26}},\
  \bibinfo {pages} {085018} (\bibinfo {year} {2009})},\ \Eprint
  {https://arxiv.org/abs/0811.4721} {arXiv:0811.4721 [gr-qc]} \BibitemShut
  {NoStop}%
%%CITATION = ARXIV:0811.4721;%%
\bibitem [{\citenamefont {Jaramillo}\ \emph {et~al.}(2011)\citenamefont
  {Jaramillo}, \citenamefont {Reiris},\ and\ \citenamefont
  {Dain}}]{Jaramillo:2011pg}%
  \BibitemOpen
  \bibfield  {author} {\bibinfo {author} {\bibfnamefont {J.~L.}\ \bibnamefont
  {Jaramillo}}, \bibinfo {author} {\bibfnamefont {M.}~\bibnamefont {Reiris}},\
  and\ \bibinfo {author} {\bibfnamefont {S.}~\bibnamefont {Dain}},\ }\bibfield
  {title} {\bibinfo {title} {{Black hole Area-Angular momentum inequality in
  non-vacuum spacetimes}},\ }\href {https://doi.org/10.1103/PhysRevD.84.121503}
  {\bibfield  {journal} {\bibinfo  {journal} {Phys.Rev.}\ }\textbf {\bibinfo
  {volume} {D84}},\ \bibinfo {pages} {121503} (\bibinfo {year} {2011})},\
  \Eprint {https://arxiv.org/abs/1106.3743} {arXiv:1106.3743 [gr-qc]}
  \BibitemShut {NoStop}%
%%CITATION = ARXIV:1106.3743;%%
\bibitem [{\citenamefont {Ashtekar}\ \emph {et~al.}(2013)\citenamefont
  {Ashtekar}, \citenamefont {Campiglia},\ and\ \citenamefont
  {Shah}}]{Ashtekar:2013qta}%
  \BibitemOpen
  \bibfield  {author} {\bibinfo {author} {\bibfnamefont {A.}~\bibnamefont
  {Ashtekar}}, \bibinfo {author} {\bibfnamefont {M.}~\bibnamefont
  {Campiglia}},\ and\ \bibinfo {author} {\bibfnamefont {S.}~\bibnamefont
  {Shah}},\ }\bibfield  {title} {\bibinfo {title} {{Dynamical Black Holes:
  Approach to the Final State}},\ }\href
  {https://doi.org/10.1103/PhysRevD.88.064045} {\bibfield  {journal} {\bibinfo
  {journal} {Phys. Rev.}\ }\textbf {\bibinfo {volume} {D88}},\ \bibinfo {pages}
  {064045} (\bibinfo {year} {2013})},\ \Eprint
  {https://arxiv.org/abs/1306.5697} {arXiv:1306.5697 [gr-qc]} \BibitemShut
  {NoStop}%
%%CITATION = ARXIV:1306.5697;%%
\bibitem [{\citenamefont {Booth}\ \emph {et~al.}(2020)\citenamefont {Booth},
  \citenamefont {Hennigar},\ and\ \citenamefont {Mondal}}]{Booth_2020}%
  \BibitemOpen
  \bibfield  {author} {\bibinfo {author} {\bibfnamefont {I.}~\bibnamefont
  {Booth}}, \bibinfo {author} {\bibfnamefont {R.~A.}\ \bibnamefont
  {Hennigar}},\ and\ \bibinfo {author} {\bibfnamefont {S.}~\bibnamefont
  {Mondal}},\ }\bibfield  {title} {\bibinfo {title} {Marginally outer trapped
  surfaces in the schwarzschild spacetime: Multiple self-intersections and
  extreme mass ratio mergers},\ }\bibfield  {journal} {\bibinfo  {journal}
  {Physical Review D}\ }\textbf {\bibinfo {volume} {102}},\ \href
  {https://doi.org/10.1103/physrevd.102.044031} {10.1103/physrevd.102.044031}
  (\bibinfo {year} {2020})\BibitemShut {NoStop}%
\bibitem [{\citenamefont {Jaramillo}\ \emph
  {et~al.}(2012{\natexlab{a}})\citenamefont {Jaramillo}, \citenamefont
  {Macedo}, \citenamefont {M{\"o}sta},\ and\ \citenamefont
  {Rezzolla}}]{Jaramillo:2011re}%
  \BibitemOpen
  \bibfield  {author} {\bibinfo {author} {\bibfnamefont {J.~L.}\ \bibnamefont
  {Jaramillo}}, \bibinfo {author} {\bibfnamefont {R.~P.}\ \bibnamefont
  {Macedo}}, \bibinfo {author} {\bibfnamefont {P.}~\bibnamefont {M{\"o}sta}},\
  and\ \bibinfo {author} {\bibfnamefont {L.}~\bibnamefont {Rezzolla}},\
  }\bibfield  {title} {\bibinfo {title} {{Black-hole horizons as probes of
  black-hole dynamics I: post-merger recoil in head-on collisions}},\ }\href
  {https://doi.org/10.1103/PhysRevD.85.084030} {\bibfield  {journal} {\bibinfo
  {journal} {Phys.Rev.}\ }\textbf {\bibinfo {volume} {D85}},\ \bibinfo {pages}
  {084030} (\bibinfo {year} {2012}{\natexlab{a}})},\ \Eprint
  {https://arxiv.org/abs/1108.0060} {arXiv:1108.0060 [gr-qc]} \BibitemShut
  {NoStop}%
%%CITATION = ARXIV:1108.0060;%%
\bibitem [{\citenamefont {Jaramillo}\ \emph
  {et~al.}(2012{\natexlab{b}})\citenamefont {Jaramillo}, \citenamefont
  {Macedo}, \citenamefont {M{\"o}sta},\ and\ \citenamefont
  {Rezzolla}}]{Jaramillo:2011rf}%
  \BibitemOpen
  \bibfield  {author} {\bibinfo {author} {\bibfnamefont {J.~L.}\ \bibnamefont
  {Jaramillo}}, \bibinfo {author} {\bibfnamefont {R.~P.}\ \bibnamefont
  {Macedo}}, \bibinfo {author} {\bibfnamefont {P.}~\bibnamefont {M{\"o}sta}},\
  and\ \bibinfo {author} {\bibfnamefont {L.}~\bibnamefont {Rezzolla}},\
  }\bibfield  {title} {\bibinfo {title} {{Black-hole horizons as probes of
  black-hole dynamics II: geometrical insights}},\ }\href
  {https://doi.org/10.1103/PhysRevD.85.084031} {\bibfield  {journal} {\bibinfo
  {journal} {Phys. Rev.}\ }\textbf {\bibinfo {volume} {D85}},\ \bibinfo {pages}
  {084031} (\bibinfo {year} {2012}{\natexlab{b}})},\ \Eprint
  {https://arxiv.org/abs/1108.0061} {arXiv:1108.0061 [gr-qc]} \BibitemShut
  {NoStop}%
%%CITATION = ARXIV:1108.0061;%%
\bibitem [{\citenamefont {Mourier}\ \emph {et~al.}(2021)\citenamefont
  {Mourier}, \citenamefont {Jim\'enez~Forteza}, \citenamefont {Pook-Kolb},
  \citenamefont {Krishnan},\ and\ \citenamefont {Schnetter}}]{Mourier:2020mwa}%
  \BibitemOpen
  \bibfield  {author} {\bibinfo {author} {\bibfnamefont {P.}~\bibnamefont
  {Mourier}}, \bibinfo {author} {\bibfnamefont {X.}~\bibnamefont
  {Jim\'enez~Forteza}}, \bibinfo {author} {\bibfnamefont {D.}~\bibnamefont
  {Pook-Kolb}}, \bibinfo {author} {\bibfnamefont {B.}~\bibnamefont
  {Krishnan}},\ and\ \bibinfo {author} {\bibfnamefont {E.}~\bibnamefont
  {Schnetter}},\ }\bibfield  {title} {\bibinfo {title} {{Quasinormal modes and
  their overtones at the common horizon in a binary black hole merger}},\
  }\href {https://doi.org/10.1103/PhysRevD.103.044054} {\bibfield  {journal}
  {\bibinfo  {journal} {Phys. Rev. D}\ }\textbf {\bibinfo {volume} {103}},\
  \bibinfo {pages} {044054} (\bibinfo {year} {2021})},\ \Eprint
  {https://arxiv.org/abs/2010.15186} {arXiv:2010.15186 [gr-qc]} \BibitemShut
  {NoStop}%
\bibitem [{\citenamefont {Booth}\ \emph {et~al.}(2021)\citenamefont {Booth},
  \citenamefont {Hennigar},\ and\ \citenamefont {Pook-Kolb}}]{Booth:2021sow}%
  \BibitemOpen
  \bibfield  {author} {\bibinfo {author} {\bibfnamefont {I.}~\bibnamefont
  {Booth}}, \bibinfo {author} {\bibfnamefont {R.~A.}\ \bibnamefont
  {Hennigar}},\ and\ \bibinfo {author} {\bibfnamefont {D.}~\bibnamefont
  {Pook-Kolb}},\ }\bibfield  {title} {\bibinfo {title} {{Ultimate fate of
  apparent horizons during a binary black hole merger. I. Locating and
  understanding axisymmetric marginally outer trapped surfaces}},\ }\href
  {https://doi.org/10.1103/PhysRevD.104.084083} {\bibfield  {journal} {\bibinfo
   {journal} {Phys. Rev. D}\ }\textbf {\bibinfo {volume} {104}},\ \bibinfo
  {pages} {084083} (\bibinfo {year} {2021})},\ \Eprint
  {https://arxiv.org/abs/2104.11343} {arXiv:2104.11343 [gr-qc]} \BibitemShut
  {NoStop}%
\bibitem [{\citenamefont {Pook-Kolb}\ \emph
  {et~al.}(2021{\natexlab{a}})\citenamefont {Pook-Kolb}, \citenamefont
  {Hennigar},\ and\ \citenamefont {Booth}}]{Pook-Kolb:2021gsh}%
  \BibitemOpen
  \bibfield  {author} {\bibinfo {author} {\bibfnamefont {D.}~\bibnamefont
  {Pook-Kolb}}, \bibinfo {author} {\bibfnamefont {R.~A.}\ \bibnamefont
  {Hennigar}},\ and\ \bibinfo {author} {\bibfnamefont {I.}~\bibnamefont
  {Booth}},\ }\bibfield  {title} {\bibinfo {title} {{What Happens to Apparent
  Horizons in a Binary Black Hole Merger?}},\ }\href
  {https://doi.org/10.1103/PhysRevLett.127.181101} {\bibfield  {journal}
  {\bibinfo  {journal} {Phys. Rev. Lett.}\ }\textbf {\bibinfo {volume} {127}},\
  \bibinfo {pages} {181101} (\bibinfo {year} {2021}{\natexlab{a}})},\ \Eprint
  {https://arxiv.org/abs/2104.10265} {arXiv:2104.10265 [gr-qc]} \BibitemShut
  {NoStop}%
\bibitem [{\citenamefont {Pook-Kolb}\ \emph
  {et~al.}(2021{\natexlab{b}})\citenamefont {Pook-Kolb}, \citenamefont
  {Booth},\ and\ \citenamefont {Hennigar}}]{Pook-Kolb:2021jpd}%
  \BibitemOpen
  \bibfield  {author} {\bibinfo {author} {\bibfnamefont {D.}~\bibnamefont
  {Pook-Kolb}}, \bibinfo {author} {\bibfnamefont {I.}~\bibnamefont {Booth}},\
  and\ \bibinfo {author} {\bibfnamefont {R.~A.}\ \bibnamefont {Hennigar}},\
  }\bibfield  {title} {\bibinfo {title} {{Ultimate fate of apparent horizons
  during a binary black hole merger. II. The vanishing of apparent horizons}},\
  }\href {https://doi.org/10.1103/PhysRevD.104.084084} {\bibfield  {journal}
  {\bibinfo  {journal} {Phys. Rev. D}\ }\textbf {\bibinfo {volume} {104}},\
  \bibinfo {pages} {084084} (\bibinfo {year} {2021}{\natexlab{b}})},\ \Eprint
  {https://arxiv.org/abs/2104.11344} {arXiv:2104.11344 [gr-qc]} \BibitemShut
  {NoStop}%
\bibitem [{\citenamefont {Chen}\ \emph {et~al.}(2022)\citenamefont {Chen},
  \citenamefont {Kumar}, \citenamefont {Khera}, \citenamefont {Deppe},
  \citenamefont {Dhani}, \citenamefont {Boyle}, \citenamefont {Giesler},
  \citenamefont {Kidder}, \citenamefont {Pfeiffer}, \citenamefont {Scheel},\
  and\ \citenamefont {Teukolsky}}]{Chen_2022}%
  \BibitemOpen
  \bibfield  {author} {\bibinfo {author} {\bibfnamefont {Y.}~\bibnamefont
  {Chen}}, \bibinfo {author} {\bibfnamefont {P.}~\bibnamefont {Kumar}},
  \bibinfo {author} {\bibfnamefont {N.}~\bibnamefont {Khera}}, \bibinfo
  {author} {\bibfnamefont {N.}~\bibnamefont {Deppe}}, \bibinfo {author}
  {\bibfnamefont {A.}~\bibnamefont {Dhani}}, \bibinfo {author} {\bibfnamefont
  {M.}~\bibnamefont {Boyle}}, \bibinfo {author} {\bibfnamefont
  {M.}~\bibnamefont {Giesler}}, \bibinfo {author} {\bibfnamefont {L.~E.}\
  \bibnamefont {Kidder}}, \bibinfo {author} {\bibfnamefont {H.~P.}\
  \bibnamefont {Pfeiffer}}, \bibinfo {author} {\bibfnamefont {M.~A.}\
  \bibnamefont {Scheel}},\ and\ \bibinfo {author} {\bibfnamefont {S.~A.}\
  \bibnamefont {Teukolsky}},\ }\bibfield  {title} {\bibinfo {title} {Multipole
  moments on the common horizon in a binary-black-hole simulation},\ }\bibfield
   {journal} {\bibinfo  {journal} {Physical Review D}\ }\textbf {\bibinfo
  {volume} {106}},\ \href {https://doi.org/10.1103/physrevd.106.124045}
  {10.1103/physrevd.106.124045} (\bibinfo {year} {2022})\BibitemShut {NoStop}%
\bibitem [{\citenamefont {Khera}\ \emph {et~al.}(2023)\citenamefont {Khera},
  \citenamefont {Ribes~Metidieri}, \citenamefont {Bonga}, \citenamefont
  {Jim\'enez~Forteza}, \citenamefont {Krishnan}, \citenamefont {Poisson},
  \citenamefont {Pook-Kolb}, \citenamefont {Schnetter},\ and\ \citenamefont
  {Yang}}]{Khera:2023oyf}%
  \BibitemOpen
  \bibfield  {author} {\bibinfo {author} {\bibfnamefont {N.}~\bibnamefont
  {Khera}}, \bibinfo {author} {\bibfnamefont {A.}~\bibnamefont
  {Ribes~Metidieri}}, \bibinfo {author} {\bibfnamefont {B.}~\bibnamefont
  {Bonga}}, \bibinfo {author} {\bibfnamefont {X.}~\bibnamefont
  {Jim\'enez~Forteza}}, \bibinfo {author} {\bibfnamefont {B.}~\bibnamefont
  {Krishnan}}, \bibinfo {author} {\bibfnamefont {E.}~\bibnamefont {Poisson}},
  \bibinfo {author} {\bibfnamefont {D.}~\bibnamefont {Pook-Kolb}}, \bibinfo
  {author} {\bibfnamefont {E.}~\bibnamefont {Schnetter}},\ and\ \bibinfo
  {author} {\bibfnamefont {H.}~\bibnamefont {Yang}},\ }\bibfield  {title}
  {\bibinfo {title} {{Nonlinear Ringdown at the Black Hole Horizon}},\ }\href
  {https://doi.org/10.1103/PhysRevLett.131.231401} {\bibfield  {journal}
  {\bibinfo  {journal} {Phys. Rev. Lett.}\ }\textbf {\bibinfo {volume} {131}},\
  \bibinfo {pages} {231401} (\bibinfo {year} {2023})},\ \Eprint
  {https://arxiv.org/abs/2306.11142} {arXiv:2306.11142 [gr-qc]} \BibitemShut
  {NoStop}%
\bibitem [{\citenamefont {Ashtekar}\ and\ \citenamefont
  {Bojowald}(2005)}]{Ashtekar:2005cj}%
  \BibitemOpen
  \bibfield  {author} {\bibinfo {author} {\bibfnamefont {A.}~\bibnamefont
  {Ashtekar}}\ and\ \bibinfo {author} {\bibfnamefont {M.}~\bibnamefont
  {Bojowald}},\ }\bibfield  {title} {\bibinfo {title} {{Black hole evaporation:
  A paradigm}},\ }\href {https://doi.org/10.1088/0264-9381/22/16/014}
  {\bibfield  {journal} {\bibinfo  {journal} {Class. Quant. Grav.}\ }\textbf
  {\bibinfo {volume} {22}},\ \bibinfo {pages} {3349} (\bibinfo {year}
  {2005})},\ \Eprint {https://arxiv.org/abs/gr-qc/0504029}
  {arXiv:gr-qc/0504029} \BibitemShut {NoStop}%
CITATION = GR-QC/0504029;
\bibitem [{\citenamefont {Hayward}(2006)}]{Hayward_2006}%
  \BibitemOpen
  \bibfield  {author} {\bibinfo {author} {\bibfnamefont {S.~A.}\ \bibnamefont
  {Hayward}},\ }\bibfield  {title} {\bibinfo {title} {Formation and evaporation
  of nonsingular black holes},\ }\bibfield  {journal} {\bibinfo  {journal}
  {Physical Review Letters}\ }\textbf {\bibinfo {volume} {96}},\ \href
  {https://doi.org/10.1103/physrevlett.96.031103}
  {10.1103/physrevlett.96.031103} (\bibinfo {year} {2006})\BibitemShut
  {NoStop}%
\bibitem [{\citenamefont {Ashtekar}\ \emph {et~al.}(2008)\citenamefont
  {Ashtekar}, \citenamefont {Taveras},\ and\ \citenamefont
  {Varadarajan}}]{Ashtekar_2008}%
  \BibitemOpen
  \bibfield  {author} {\bibinfo {author} {\bibfnamefont {A.}~\bibnamefont
  {Ashtekar}}, \bibinfo {author} {\bibfnamefont {V.}~\bibnamefont {Taveras}},\
  and\ \bibinfo {author} {\bibfnamefont {M.}~\bibnamefont {Varadarajan}},\
  }\bibfield  {title} {\bibinfo {title} {Information is not lost in the
  evaporation of 2d black holes},\ }\bibfield  {journal} {\bibinfo  {journal}
  {Physical Review Letters}\ }\textbf {\bibinfo {volume} {100}},\ \href
  {https://doi.org/10.1103/physrevlett.100.211302}
  {10.1103/physrevlett.100.211302} (\bibinfo {year} {2008})\BibitemShut
  {NoStop}%
\bibitem [{\citenamefont {Ashtekar}\ \emph
  {et~al.}(2011{\natexlab{a}})\citenamefont {Ashtekar}, \citenamefont
  {Pretorius},\ and\ \citenamefont {Ramazano?lu}}]{Ashtekar_2011a}%
  \BibitemOpen
  \bibfield  {author} {\bibinfo {author} {\bibfnamefont {A.}~\bibnamefont
  {Ashtekar}}, \bibinfo {author} {\bibfnamefont {F.}~\bibnamefont
  {Pretorius}},\ and\ \bibinfo {author} {\bibfnamefont {F.~M.}\ \bibnamefont
  {Ramazano?lu}},\ }\bibfield  {title} {\bibinfo {title} {Surprises in the
  evaporation of 2d black holes},\ }\bibfield  {journal} {\bibinfo  {journal}
  {Physical Review Letters}\ }\textbf {\bibinfo {volume} {106}},\ \href
  {https://doi.org/10.1103/physrevlett.106.161303}
  {10.1103/physrevlett.106.161303} (\bibinfo {year}
  {2011}{\natexlab{a}})\BibitemShut {NoStop}%
\bibitem [{\citenamefont {Ashtekar}\ \emph
  {et~al.}(2011{\natexlab{b}})\citenamefont {Ashtekar}, \citenamefont
  {Pretorius},\ and\ \citenamefont {Ramazano?lu}}]{Ashtekar_2011b}%
  \BibitemOpen
  \bibfield  {author} {\bibinfo {author} {\bibfnamefont {A.}~\bibnamefont
  {Ashtekar}}, \bibinfo {author} {\bibfnamefont {F.}~\bibnamefont
  {Pretorius}},\ and\ \bibinfo {author} {\bibfnamefont {F.~M.}\ \bibnamefont
  {Ramazano?lu}},\ }\bibfield  {title} {\bibinfo {title} {Evaporation of
  two-dimensional black holes},\ }\bibfield  {journal} {\bibinfo  {journal}
  {Physical Review D}\ }\textbf {\bibinfo {volume} {83}},\ \href
  {https://doi.org/10.1103/physrevd.83.044040} {10.1103/physrevd.83.044040}
  (\bibinfo {year} {2011}{\natexlab{b}})\BibitemShut {NoStop}%
\bibitem [{\citenamefont {Frolov}(2014{\natexlab{a}})}]{Frolov_2014}%
  \BibitemOpen
  \bibfield  {author} {\bibinfo {author} {\bibfnamefont {V.~P.}\ \bibnamefont
  {Frolov}},\ }\bibfield  {title} {\bibinfo {title} {Information loss problem
  and a black hole model with a closed apparent horizon},\ }\bibfield
  {journal} {\bibinfo  {journal} {Journal of High Energy Physics}\ }\textbf
  {\bibinfo {volume} {2014}},\ \href {https://doi.org/10.1007/jhep05(2014)049}
  {10.1007/jhep05(2014)049} (\bibinfo {year} {2014}{\natexlab{a}})\BibitemShut
  {NoStop}%
\bibitem [{\citenamefont
  {Frolov}(2014{\natexlab{b}})}]{frolov2014blackholesexist}%
  \BibitemOpen
  \bibfield  {author} {\bibinfo {author} {\bibfnamefont {V.~P.}\ \bibnamefont
  {Frolov}},\ }\href {https://arxiv.org/abs/1411.6981} {\bibinfo {title} {Do
  black holes exist?}} (\bibinfo {year} {2014}{\natexlab{b}}),\ \Eprint
  {https://arxiv.org/abs/1411.6981} {arXiv:1411.6981 [hep-th]} \BibitemShut
  {NoStop}%
\bibitem [{\citenamefont {Bianchi}\ \emph {et~al.}(2018)\citenamefont
  {Bianchi}, \citenamefont {Christodoulou}, \citenamefont {D'Ambrosio},
  \citenamefont {Haggard},\ and\ \citenamefont {Rovelli}}]{Bianchi_2018}%
  \BibitemOpen
  \bibfield  {author} {\bibinfo {author} {\bibfnamefont {E.}~\bibnamefont
  {Bianchi}}, \bibinfo {author} {\bibfnamefont {M.}~\bibnamefont
  {Christodoulou}}, \bibinfo {author} {\bibfnamefont {F.}~\bibnamefont
  {D'Ambrosio}}, \bibinfo {author} {\bibfnamefont {H.~M.}\ \bibnamefont
  {Haggard}},\ and\ \bibinfo {author} {\bibfnamefont {C.}~\bibnamefont
  {Rovelli}},\ }\bibfield  {title} {\bibinfo {title} {White holes as remnants:
  a surprising scenario for the end of a black hole},\ }\href
  {https://doi.org/10.1088/1361-6382/aae550} {\bibfield  {journal} {\bibinfo
  {journal} {Classical and Quantum Gravity}\ }\textbf {\bibinfo {volume}
  {35}},\ \bibinfo {pages} {225003} (\bibinfo {year} {2018})}\BibitemShut
  {NoStop}%
\bibitem [{\citenamefont
  {Ashtekar}(2025)}]{ashtekar2025blackholeevaporationloop}%
  \BibitemOpen
  \bibfield  {author} {\bibinfo {author} {\bibfnamefont {A.}~\bibnamefont
  {Ashtekar}},\ }\bibfield  {title} {\bibinfo {title} {Black hole evaporation
  in loop quantum gravity},\ }\href {https://arxiv.org/abs/2502.04252}
  {\bibfield  {journal} {\bibinfo  {journal} {Gen. Rel. Grav.}\ }\textbf
  {\bibinfo {volume} {57}} (\bibinfo {year} {2025})}\BibitemShut {NoStop}%
\bibitem [{\citenamefont {Ashtekar}\ and\ \citenamefont
  {Krishnan}(2025)}]{ashtekar2025quasilocalblackholehorizons}%
  \BibitemOpen
  \bibfield  {author} {\bibinfo {author} {\bibfnamefont {A.}~\bibnamefont
  {Ashtekar}}\ and\ \bibinfo {author} {\bibfnamefont {B.}~\bibnamefont
  {Krishnan}},\ }\href {https://arxiv.org/abs/2502.11825} {\bibinfo {title}
  {Quasi-local black hole horizons: Recent advances}} (\bibinfo {year}
  {2025}),\ \Eprint {https://arxiv.org/abs/2502.11825} {arXiv:2502.11825
  [gr-qc]} \BibitemShut {NoStop}%
\bibitem [{\citenamefont {Ashtekar}\ \emph {et~al.}(2000)\citenamefont
  {Ashtekar}, \citenamefont {Fairhurst},\ and\ \citenamefont
  {Krishnan}}]{Ashtekar:2000hw}%
  \BibitemOpen
  \bibfield  {author} {\bibinfo {author} {\bibfnamefont {A.}~\bibnamefont
  {Ashtekar}}, \bibinfo {author} {\bibfnamefont {S.}~\bibnamefont
  {Fairhurst}},\ and\ \bibinfo {author} {\bibfnamefont {B.}~\bibnamefont
  {Krishnan}},\ }\bibfield  {title} {\bibinfo {title} {{Isolated horizons:
  Hamiltonian evolution and the first law}},\ }\href
  {https://doi.org/10.1103/PhysRevD.62.104025} {\bibfield  {journal} {\bibinfo
  {journal} {Phys. Rev.}\ }\textbf {\bibinfo {volume} {D62}},\ \bibinfo {pages}
  {104025} (\bibinfo {year} {2000})},\ \Eprint
  {https://arxiv.org/abs/gr-qc/0005083} {arXiv:gr-qc/0005083} \BibitemShut
  {NoStop}%
%%CITATION = GR-QC/0005083;%%
\bibitem [{\citenamefont {Ashtekar}\ \emph {et~al.}(1999)\citenamefont
  {Ashtekar}, \citenamefont {Beetle},\ and\ \citenamefont
  {Fairhurst}}]{Ashtekar:1998sp}%
  \BibitemOpen
  \bibfield  {author} {\bibinfo {author} {\bibfnamefont {A.}~\bibnamefont
  {Ashtekar}}, \bibinfo {author} {\bibfnamefont {C.}~\bibnamefont {Beetle}},\
  and\ \bibinfo {author} {\bibfnamefont {S.}~\bibnamefont {Fairhurst}},\
  }\bibfield  {title} {\bibinfo {title} {{Isolated horizons: A generalization
  of black hole mechanics}},\ }\href
  {https://doi.org/10.1088/0264-9381/16/2/027} {\bibfield  {journal} {\bibinfo
  {journal} {Class. Quant. Grav.}\ }\textbf {\bibinfo {volume} {16}},\ \bibinfo
  {pages} {L1} (\bibinfo {year} {1999})},\ \Eprint
  {https://arxiv.org/abs/gr-qc/9812065} {arXiv:gr-qc/9812065} \BibitemShut
  {NoStop}%
%%CITATION = GR-QC/9812065;%%
\bibitem [{\citenamefont {Ashtekar}\ \emph {et~al.}(2004)\citenamefont
  {Ashtekar}, \citenamefont {Engle}, \citenamefont {Pawlowski},\ and\
  \citenamefont {Van Den~Broeck}}]{Ashtekar:2004gp}%
  \BibitemOpen
  \bibfield  {author} {\bibinfo {author} {\bibfnamefont {A.}~\bibnamefont
  {Ashtekar}}, \bibinfo {author} {\bibfnamefont {J.}~\bibnamefont {Engle}},
  \bibinfo {author} {\bibfnamefont {T.}~\bibnamefont {Pawlowski}},\ and\
  \bibinfo {author} {\bibfnamefont {C.}~\bibnamefont {Van Den~Broeck}},\
  }\bibfield  {title} {\bibinfo {title} {{Multipole moments of isolated
  horizons}},\ }\href {https://doi.org/10.1088/0264-9381/21/11/003} {\bibfield
  {journal} {\bibinfo  {journal} {Class. Quant. Grav.}\ }\textbf {\bibinfo
  {volume} {21}},\ \bibinfo {pages} {2549} (\bibinfo {year} {2004})},\ \Eprint
  {https://arxiv.org/abs/gr-qc/0401114} {arXiv:gr-qc/0401114} \BibitemShut
  {NoStop}%
%%CITATION = GR-QC/0401114;%%
\bibitem [{\citenamefont {Schnetter}\ \emph {et~al.}(2006)\citenamefont
  {Schnetter}, \citenamefont {Krishnan},\ and\ \citenamefont
  {Beyer}}]{Schnetter:2006yt}%
  \BibitemOpen
  \bibfield  {author} {\bibinfo {author} {\bibfnamefont {E.}~\bibnamefont
  {Schnetter}}, \bibinfo {author} {\bibfnamefont {B.}~\bibnamefont
  {Krishnan}},\ and\ \bibinfo {author} {\bibfnamefont {F.}~\bibnamefont
  {Beyer}},\ }\bibfield  {title} {\bibinfo {title} {{Introduction to dynamical
  horizons in numerical relativity}},\ }\href
  {https://doi.org/10.1103/PhysRevD.74.024028} {\bibfield  {journal} {\bibinfo
  {journal} {Phys. Rev.}\ }\textbf {\bibinfo {volume} {D74}},\ \bibinfo {pages}
  {024028} (\bibinfo {year} {2006})},\ \Eprint
  {https://arxiv.org/abs/gr-qc/0604015} {arXiv:gr-qc/0604015} \BibitemShut
  {NoStop}%
%%CITATION = GR-QC/0604015;%%
\bibitem [{\citenamefont {Mars}\ and\ \citenamefont
  {Senovilla}(2003)}]{Mars:2003ud}%
  \BibitemOpen
  \bibfield  {author} {\bibinfo {author} {\bibfnamefont {M.}~\bibnamefont
  {Mars}}\ and\ \bibinfo {author} {\bibfnamefont {J.~M.~M.}\ \bibnamefont
  {Senovilla}},\ }\bibfield  {title} {\bibinfo {title} {{Trapped surfaces and
  symmetries}},\ }\href {https://doi.org/10.1088/0264-9381/20/24/L03}
  {\bibfield  {journal} {\bibinfo  {journal} {Class. Quant. Grav.}\ }\textbf
  {\bibinfo {volume} {20}},\ \bibinfo {pages} {L293} (\bibinfo {year}
  {2003})},\ \Eprint {https://arxiv.org/abs/gr-qc/0309055}
  {arXiv:gr-qc/0309055} \BibitemShut {NoStop}%
\bibitem [{\citenamefont {Carrasco}\ and\ \citenamefont
  {Mars}(2009)}]{Carrasco:2009sa}%
  \BibitemOpen
  \bibfield  {author} {\bibinfo {author} {\bibfnamefont {A.}~\bibnamefont
  {Carrasco}}\ and\ \bibinfo {author} {\bibfnamefont {M.}~\bibnamefont
  {Mars}},\ }\bibfield  {title} {\bibinfo {title} {{Stability of marginally
  outer trapped surfaces and symmetries}},\ }\href
  {https://doi.org/10.1088/0264-9381/26/17/175002} {\bibfield  {journal}
  {\bibinfo  {journal} {Class. Quant. Grav.}\ }\textbf {\bibinfo {volume}
  {26}},\ \bibinfo {pages} {175002} (\bibinfo {year} {2009})},\ \Eprint
  {https://arxiv.org/abs/0903.4153} {arXiv:0903.4153 [gr-qc]} \BibitemShut
  {NoStop}%
\bibitem [{\citenamefont {Booth}\ \emph {et~al.}(2024)\citenamefont {Booth},
  \citenamefont {Cox},\ and\ \citenamefont {Margalef-Bentabol}}]{Booth_2024}%
  \BibitemOpen
  \bibfield  {author} {\bibinfo {author} {\bibfnamefont {I.}~\bibnamefont
  {Booth}}, \bibinfo {author} {\bibfnamefont {G.}~\bibnamefont {Cox}},\ and\
  \bibinfo {author} {\bibfnamefont {J.}~\bibnamefont {Margalef-Bentabol}},\
  }\bibfield  {title} {\bibinfo {title} {Symmetry and instability of marginally
  outer trapped surfaces},\ }\href {https://doi.org/10.1088/1361-6382/ad3dab}
  {\bibfield  {journal} {\bibinfo  {journal} {Classical and Quantum Gravity}\
  }\textbf {\bibinfo {volume} {41}},\ \bibinfo {pages} {115003} (\bibinfo
  {year} {2024})}\BibitemShut {NoStop}%
\bibitem [{\citenamefont {Ashtekar}\ and\ \citenamefont
  {Magnon-Ashtekar}(1978)}]{Ashtekar1978}%
  \BibitemOpen
  \bibfield  {author} {\bibinfo {author} {\bibfnamefont {A.}~\bibnamefont
  {Ashtekar}}\ and\ \bibinfo {author} {\bibfnamefont {A.}~\bibnamefont
  {Magnon-Ashtekar}},\ }\bibfield  {title} {\bibinfo {title} {A technique for
  analyzing the structure of isometries},\ }\href
  {https://doi.org/10.1063/1.523864} {\bibfield  {journal} {\bibinfo  {journal}
  {Journal of Mathematical Physics}\ }\textbf {\bibinfo {volume} {19}},\
  \bibinfo {pages} {1567} (\bibinfo {year} {1978})}\BibitemShut {NoStop}%
\bibitem [{\citenamefont {Pook-Kolb}\ \emph
  {et~al.}(2019{\natexlab{a}})\citenamefont {Pook-Kolb}, \citenamefont
  {Birnholtz}, \citenamefont {Krishnan},\ and\ \citenamefont
  {Schnetter}}]{Pook-Kolb:2018igu}%
  \BibitemOpen
  \bibfield  {author} {\bibinfo {author} {\bibfnamefont {D.}~\bibnamefont
  {Pook-Kolb}}, \bibinfo {author} {\bibfnamefont {O.}~\bibnamefont
  {Birnholtz}}, \bibinfo {author} {\bibfnamefont {B.}~\bibnamefont
  {Krishnan}},\ and\ \bibinfo {author} {\bibfnamefont {E.}~\bibnamefont
  {Schnetter}},\ }\bibfield  {title} {\bibinfo {title} {Existence and stability
  of marginally trapped surfaces in black-hole spacetimes},\ }\href
  {https://doi.org/10.1103/PhysRevD.99.064005} {\bibfield  {journal} {\bibinfo
  {journal} {Phys. Rev. D}\ }\textbf {\bibinfo {volume} {99}},\ \bibinfo
  {pages} {064005} (\bibinfo {year} {2019}{\natexlab{a}})}\BibitemShut
  {NoStop}%
\bibitem [{\citenamefont {Pook-Kolb}\ \emph
  {et~al.}(2019{\natexlab{b}})\citenamefont {Pook-Kolb}, \citenamefont
  {Birnholtz}, \citenamefont {Krishnan},\ and\ \citenamefont
  {Schnetter}}]{PhysRevLett.123.171102}%
  \BibitemOpen
  \bibfield  {author} {\bibinfo {author} {\bibfnamefont {D.}~\bibnamefont
  {Pook-Kolb}}, \bibinfo {author} {\bibfnamefont {O.}~\bibnamefont
  {Birnholtz}}, \bibinfo {author} {\bibfnamefont {B.}~\bibnamefont
  {Krishnan}},\ and\ \bibinfo {author} {\bibfnamefont {E.}~\bibnamefont
  {Schnetter}},\ }\bibfield  {title} {\bibinfo {title} {Interior of a binary
  black hole merger},\ }\href {https://doi.org/10.1103/PhysRevLett.123.171102}
  {\bibfield  {journal} {\bibinfo  {journal} {Phys. Rev. Lett.}\ }\textbf
  {\bibinfo {volume} {123}},\ \bibinfo {pages} {171102} (\bibinfo {year}
  {2019}{\natexlab{b}})}\BibitemShut {NoStop}%
\bibitem [{\citenamefont {Pook-Kolb}\ \emph
  {et~al.}(2019{\natexlab{c}})\citenamefont {Pook-Kolb}, \citenamefont
  {Birnholtz}, \citenamefont {Krishnan},\ and\ \citenamefont
  {Schnetter}}]{pook_kolb_daniel_2019_2591105}%
  \BibitemOpen
  \bibfield  {author} {\bibinfo {author} {\bibfnamefont {D.}~\bibnamefont
  {Pook-Kolb}}, \bibinfo {author} {\bibfnamefont {O.}~\bibnamefont
  {Birnholtz}}, \bibinfo {author} {\bibfnamefont {B.}~\bibnamefont
  {Krishnan}},\ and\ \bibinfo {author} {\bibfnamefont {E.}~\bibnamefont
  {Schnetter}},\ }\href {https://doi.org/10.5281/zenodo.2591105} {\bibinfo
  {title} {{MOTS Finder} version 1.1}} (\bibinfo {year} {2019}{\natexlab{c}}),\
  \bibinfo {note} {10.5281/zenodo.2591105}\BibitemShut {NoStop}%
\bibitem [{\citenamefont {Ashtekar}\ and\ \citenamefont
  {Schmidt}(1980)}]{doi:10.1063/1.524467}%
  \BibitemOpen
  \bibfield  {author} {\bibinfo {author} {\bibfnamefont {A.}~\bibnamefont
  {Ashtekar}}\ and\ \bibinfo {author} {\bibfnamefont {B.~G.}\ \bibnamefont
  {Schmidt}},\ }\bibfield  {title} {\bibinfo {title} {Null infinity and killing
  fields},\ }\href {https://doi.org/10.1063/1.524467} {\bibfield  {journal}
  {\bibinfo  {journal} {Journal of Mathematical Physics}\ }\textbf {\bibinfo
  {volume} {21}},\ \bibinfo {pages} {862} (\bibinfo {year} {1980})}\BibitemShut
  {NoStop}%
\bibitem [{\citenamefont {Bic\'ak}\ and\ \citenamefont
  {Schmidt}(1989)}]{BicakSchmidt1989}%
  \BibitemOpen
  \bibfield  {author} {\bibinfo {author} {\bibfnamefont {J.}~\bibnamefont
  {Bic\'ak}}\ and\ \bibinfo {author} {\bibfnamefont {B.~G.}\ \bibnamefont
  {Schmidt}},\ }\bibfield  {title} {\bibinfo {title} {Asymptotically flat
  radiative space-times with boost-rotation symmetry: The general structure},\
  }\href {https://doi.org/10.1103/PhysRevD.40.1827} {\bibfield  {journal}
  {\bibinfo  {journal} {Physical Review D}\ }\textbf {\bibinfo {volume} {40}},\
  \bibinfo {pages} {1827} (\bibinfo {year} {1989})}\BibitemShut {NoStop}%
\bibitem [{\citenamefont {Ashtekar}\ and\ \citenamefont
  {Dray}(1981)}]{Ashtekar:1981}%
  \BibitemOpen
  \bibfield  {author} {\bibinfo {author} {\bibfnamefont {A.}~\bibnamefont
  {Ashtekar}}\ and\ \bibinfo {author} {\bibfnamefont {T.}~\bibnamefont
  {Dray}},\ }\bibfield  {title} {\bibinfo {title} {On the existence of
  solutions to einstein's equation with non-zero bondi news},\ }\href
  {https://doi.org/10.1007/BF01209313} {\bibfield  {journal} {\bibinfo
  {journal} {Communications in Mathematical Physics}\ }\textbf {\bibinfo
  {volume} {79}},\ \bibinfo {pages} {581} (\bibinfo {year} {1981})}\BibitemShut
  {NoStop}%
\end{thebibliography}%

\end{document}